\def\MODE{3}
\newtheorem{thm}{Theorem}
\newtheorem{problem}[thm]{Problem}
\newtheorem{lem}[thm]{Lemma}
\newtheorem{cor}[thm]{Corollary}
\renewenvironment{proof}{{\noindent\bf Proof.}}{ \hfill ~\qed}
\def\qed{\rule[0pt]{5pt}{5pt}\par\medskip}
\newcommand{\tp}{\mathsf{T}}		
\newcommand{\R}{\mathbb{R}}			
\newcommand{\ord}{\mathcal{O}}		
\newcommand{\eqt}{\buildrel\smash{\smash t}\over =}
\newcommand{\geqt}{\buildrel\smash{\smash t}\over \geq}
\newcommand{\gt}{\buildrel\smash{\smash t}\over >}
\DeclareMathOperator{\norml}{\mathcal{N}}		
\DeclareMathOperator{\ee}{\mathbb{E}}			
\DeclareMathOperator{\prob}{\mathbb{P}}			
\DeclareMathOperator{\vecc}{\mathbf{vec}}		
\DeclareMathOperator{\tr}{\mathbf{tr}}			
\newcommand{\bmat}[1]{\begin{bmatrix}#1\end{bmatrix}}
\newcommand{\probc}[2]{\prob\!\left(#1\,\middle\vert\,#2\right)}	
\newcommand{\eec}[2]{\ee\!\left(#1\,\middle\vert\,#2\right)}		
\newcommand{\probcs}[2]{\prob(#1\,\vert\,#2)}						
\newcommand{\eecs}[2]{\ee(#1\,\vert\,#2)}							
\newenvironment{red}{%
  
  \MakeFramed{\advance\hsize-\width \FrameRestore}}
  {\endMakeFramed}
\newcommand{\boxit}[1]{\vspace{5mm}\noindent
\fbox{\begin{minipage}{0.98\linewidth}#1\end{minipage}}\vspace{5mm}

}
\begin{document}

\if\MODE3
\title{Structural Results and Explicit Solution\\
		for Two-Player LQG Systems on a\\
		Finite Time Horizon}
\else
\title{Structural Results and Explicit Solution for Two-Player Partially Nested LQG Systems}
\fi

\if\MODE3\author{Laurent Lessard \and Ashutosh Nayyar}
\else\author{Laurent Lessard\footnotemark[1] \and
		Ashutosh Nayyar\footnotemark[2]}\fi
\note{To Appear, IEEE Conference on Decision and Control 2013}
\maketitle

\if\MODE3
\else
\footnotetext[1]{L.~Lessard is with the Department of Mechanical Engineering at the
University of California, Berkeley, CA~94720, USA. \texttt{lessard@berkeley.edu}}
\footnotetext[2]{A.~Nayyar is with the Department of Electrical Engineering and Computer Sciences, University of California, Berkeley, CA~94720, USA. \texttt{anayyar@berkeley.edu}}
\fi


\begin{abstract}
It is well-known that linear dynamical systems with Gaussian noise and quadratic cost (LQG) satisfy a separation principle. Finding the optimal controller amounts to solving separate dual problems; one for control and one for estimation. For the discrete-time finite-horizon case, each problem is a simple forward or backward recursion. In this paper, we consider a generalization of the LQG problem with two controllers and a partially nested information structure. Each controller is responsible for one of two system inputs, but has access to different subsets of the available measurements. Our paper has three main contributions. First, we prove a fundamental structural result: sufficient statistics for the controllers can be expressed as conditional means of the global state. Second, we give explicit state-space formulae for the optimal controller. These formulae are reminiscent of the classical LQG solution with dual forward and backward recursions, but with the important difference that they are intricately coupled. Lastly, we show how these recursions can be solved efficiently, with computational complexity comparable to that of the centralized problem.
\end{abstract}

\section{Introduction}\label{sec:intro}

With the advent of large systems operating on a global scale such as the internet or power networks, the past decade has seen a resurgence of interest in decentralized control. For such large systems, it is inevitable that some control decisions must be made using only local or partial information. Two natural questions that arise are:
\begin{enumerate}
\item Can the ever-growing information history be aggregated without compromising achievable performance? In other words, what are sufficient statistics for the decision-makers?
\item When and how can optimal decentralized policies be efficiently computed?
\end{enumerate}

\noindent In this paper, we give complete answers to the above questions for a fundamental decentralized control problem: the two-player partially nested LQG problem.

Briefly, our problem consists of two linear Gaussian systems with their own local controllers. The systems are coupled. System~1 affects System~2 through its state and input but not vice-versa, and the controller for System~1 shares its measurement with the controller for System~2 but not vice versa. A formal description of the problem is given in~Section~\ref{sec:prob_statement}. 

It is believed that decentralized control problems are likely hard in general~\cite{blondel,witsenhausen}. However, partially-nested LQG problems admit an optimal controller that is linear~\cite{hochu}. In decentralized control problems, partial nestedness is typically manifested in two ways: 
\begin{enumerate}
\item If a subsystem $i$ affects another subsystem $j$, then the controller at subsystem $i$ shares all information with the controller at subsystem $j$. In other words, the information flow obeys the \emph{sparsity constraints} of the dynamics.
\item If subsystem $i$ affects subsystem $j$ after some delay~$d$, then controller at subsystem $i$ shares its information with controller at subsystem $j$ with delay not exceeding $d$. In other words, the information flow obeys the \emph{delay constraints} of the dynamics.
\end{enumerate}
Several combinations of the above two manifestations of partial nestedness have been explored under state feedback assumptions.  State feedback problems have been investigated in~\cite{swigart10dp} under sparsity constraints, in~\cite{lamperski_delayed_recent} under delay constraints and in~\cite{lamperski_lessard} under a mixture of delay and sparsity constraints. A state feedback problem where partial nestedness was captured by a partial order on subsystems was investigated in \cite{shah10}.

The problem considered in this paper is an output feedback problem where controllers observe noisy measurements  of states. With only two controllers, it is perhaps the simplest output feedback problem.  However, as we shall see, having noisy measurements introduces a nontrivial coupling between estimation and control and complicates the solution significantly. An explicit solution to the continuous-time version of the two-player problem as well as an extension to the broadcast case appeared in~\cite{lessard_broadcast,lessard_allerton,lessard_acc}. These works use a spectral factorization approach that is completely different from the common information approach used herein. Furthermore, they solve the problem over an infinite time horizon, which makes the coupling between estimation and control simpler due to the steady-state assumption. A particular case of output-feedback partially nested LQG problem namely the one-step delayed sharing problem was investigated in~\cite{nonclassical}.

Our paper has three main contributions. In Section~\ref{sec:structural_results}, we find sufficient statistics for the two-player LQG problem. Our result relies on a common information based approach developed in~\cite{nayyar_thesis} and~\cite{nayyar}. In Section~\ref{sec:explicit}, we give an explicit state-space solution to the two-player problem using dynamic programming. Lastly, in Section~\ref{sec:efficient}, we show how to efficiently compute the solution to the two-player problem, and show that it can be done with computational effort comparable to that required for the centralized version of the problem. Namely, computational effort is proportional to the length of the time horizon.

\section{Notation}\label{sec:notation}

Real vectors and matrices are represented by lower- and upper-case letters respectively. Boldface symbols denote random vectors, and their non-boldface counterparts denote particular realizations. The probability density function of $\mathbf{x}$ evaluated at $x$ is denoted $\prob(\mathbf{x}=x)$, and conditional densities are written as $\probc{\mathbf{x}}{\mathbf{y}=y}$. We write
$
\mathbf{x} = \norml(\mu,\Sigma)
$
when $\mathbf{x}$ is normally distributed with mean $\mu$ and variance $\Sigma$. This paper considers stochastic processes in discrete time over a finite time interval $[0,T]$.
Time is indicated using subscripts, and we use the colon notation to denote ranges. For example:
$
x_{0:T-1} = \{ x_0, x_1,\dots, x_{T-1} \}
$.
In general, all symbols are time-varying. In an effort to present general results while keeping equations clear and concise, we introduce a new notation to represent a family of equations. We write
\[
\mathbf{x}_+ \eqt A\mathbf{x} + \mathbf{w}
\]
to mean that $\mathbf{x}_{t+1} = A_t \mathbf{x}_t + \mathbf{w}_t$ holds for $0 \le t \le T-1$. The subscript ``$+$'' indicates that the associated symbol is incremented to $t+1$. We similarly overload summations:
\[
\sum_t x^\tp Q x 
\quad\text{instead of writing}\quad \sum_{t=0}^{T-1} x_t^\tp Q_t x_t
\]
Any time we use $t$ above a binary relation or below a summation, it is implied that $0 \le t \le T-1$. The same time horizon $T$ is used throughout this paper.

We denote subvectors by using superscripts so that they are not confused with time indices. For submatrices, which require double indexing, we will interchangeably use superscripts and subscripts to minimize clutter. For example, we write $P_+^{21}$ and $A_{22}^\tp$ to avoid writing $P_{21,+}$ and $A^{22,\tp}$ respectively. We also introduce matrices $E_1$ and $E_2$ to aid in the manipulation of $2\times 2$ block matrices. We partition an identity matrix as $I = \bmat{E_1 & E_2}$ where the dimension of $E_i$ is inferred by context. For example, suppose $B \in \R^{(n_1+n_2)\times (m_1+m_2)}$, with the block-triangular structure given by
\[
B = \bmat{ B_{11} & 0 \\ B_{21} & B_{22} }\qquad \text{where }B_{ij}\in\R^{n_i\times m_j}
\]
then we may write $E_2^\tp B E_1 = B_{21}$ and $E_1^\tp B = B_{11} E_1^\tp$. However, the same symbol may vary in dimension depending on context. In this case, either $E_i\in\R^{(n_1+n_2)\times n_i}$ or $E_i\in\R^{(m_1+m_2)\times m_i}$ depending on whether the symbol multiplies $B$ on the left or right, respectively.

\section{Problem statement}\label{sec:prob_statement}

Consider two interconnected linear systems with the following state update and measurement equations.
\begin{equation}\label{eq:state_eqns}
\hspace{-2mm}\begin{alignedat}{2}
\bmat{ \mathbf{x}_+^1 \\ \mathbf{x}_+^2 } &\eqt\,&
	\addtolength{\arraycolsep}{-0.1em}\bmat{ A_{11} & 0 \\ A_{21} & A_{22} }\!
		&\bmat{ \mathbf{x}^1 \\ \mathbf{x}^2 } +
	\addtolength{\arraycolsep}{-0.1em}\bmat{ B_{11} & 0 \\ B_{21} & B_{22} }\!
		\bmat{ \mathbf{u}^1 \\ \mathbf{u}^2 } +
	\bmat{ \mathbf{w}^1 \\ \mathbf{w}^2 } \\
\bmat{ \mathbf{y}^1 \\  \mathbf{y}^2 } &\eqt\,&
	\addtolength{\arraycolsep}{-0.1em}\bmat{ C_{11} & 0 \\ C_{21} & C_{22} }\!
		&\bmat{  \mathbf{x}^1 \\  \mathbf{x}^2 } +
	\bmat{  \mathbf{v}^1 \\  \mathbf{v}^2 }
\end{alignedat}
\end{equation}
For brevity, we write the vector $(\mathbf{x}_t^1,\mathbf{x}_t^2)$ above as simply $\mathbf{x}_t$ and similarly for $\mathbf{y}_t$, $\mathbf{u}_t$, $\mathbf{w}_t$, $\mathbf{v}_t$.
The random vectors in the collection
\begin{equation}
\left\{
	 \mathbf{x}_0, \bmat{ \mathbf{w}_0\\ \mathbf{v}_0}, \dots, \bmat{ \mathbf{w}_{T-1}\\ \mathbf{v}_{T-1}}
\right\}
\end{equation}
are mutually independent and jointly Gaussian with the following known probability density functions.
\begin{equation}\label{eq:initial_conditions}
\begin{aligned}
\mathbf{x}_0 & = \norml(0, \Sigma_\textup{init}) \\
\bmat{\mathbf{w}\\\mathbf{v}} & \eqt \norml\biggl(0,\bmat{W & U^\tp \\ U & V}\biggr)
\end{aligned}
\end{equation}
There are two controllers, and the information available to each controller at time $t$ is
\begin{equation}\label{eq:information_pattern}
\begin{aligned}
	\hat{\mathbf{i}}_t &= \left\{ \mathbf{y}_{0:t-1}^1, \mathbf{u}_{0:t-1}^1 \right\} \\
	\mathbf{i}_t &= \left\{ \mathbf{y}_{0:t-1}^1, \mathbf{y}_{0:t-1}^2,  \mathbf{u}_{0:t-1}^1, \mathbf{u}_{0:t-1}^2 \right\}
\end{aligned}
\end{equation}
The controllers select actions according to control strategies $f^i :=(f^i_0,f^i_1,\ldots,f^i_{T-1})$ for $i=1,2$. That is,
\begin{equation}\label{eq:input_definition}
	\mathbf{u}_t^1 = f^1_t(\,\hat{\mathbf{i}}_t) 
	\,\,\,\text{and}\,\,\,
	\mathbf{u}_t^2 = f^2_t(\,\mathbf{i}_t)
	\quad\text{for }0\le t\le T-1
\end{equation}
The performance of control strategies $f^1,f^2$ is measured by the
finite horizon expected quadratic cost given by
\if\MODE3
\begin{equation}\label{eq:cost_2p}
\hat{\mathcal{J}}_0(f^1,f^2) =
\ee^{f^1\!,f^2} \biggl( \sum_t
\bmat{\mathbf{x}\\\mathbf{u}}^\tp \bmat{Q & S \\ S^\tp & R} \bmat{\mathbf{x}\\ \mathbf{u}}
+ \mathbf{x}_T^\tp P_\textup{final} \mathbf{x}_T \biggr)
\end{equation}
\else
\begin{multline}\label{eq:cost_2p}
\hat{\mathcal{J}}_0(f^1,f^2) = \\
\ee^{f^1\!,f^2} \biggl( \sum_t
\bmat{\mathbf{x}\\\mathbf{u}}^\tp \bmat{Q & S \\ S^\tp & R} \bmat{\mathbf{x}\\ \mathbf{u}}
+ \mathbf{x}_T^\tp P_\textup{final} \mathbf{x}_T \biggr)
\end{multline}
\fi
The expectation is taken with respect to the joint probability measure on $(\mathbf{x}_{0:T},  \mathbf{u}_{0:T-1})$ induced by the choice
of $f^1$ and $f^2$. 
We are interested in the following problem.
\vspace{-3mm}

\boxit{
\begin{problem}[Two-Player LQG]
  \label{prob:TPLQG}
  For the model~\eqref{eq:state_eqns}--\eqref{eq:input_definition}, find control strategies $f^1,f^2$
  that minimize the cost~\eqref{eq:cost_2p}.
\end{problem}
}
\vspace{-3mm}

A related and well-known problem is the centralized LQG problem. It is the special case of the two-player problem for which there is a single decision-maker.
\vspace{-3mm}

\boxit{
\begin{problem}[Centralized LQG]\label{prob:CLQG}
Consider the model~\eqref{eq:state_eqns}--\eqref{eq:initial_conditions}, where $A$, $B$, $C$ are no longer required to be block-lower-triangular.
Suppose $\mathbf{u}_t = f_t(\,\mathbf{i}_t)$, where $\mathbf{i}_t := (\mathbf{y}_{0:t-1},\mathbf{u}_{0:t-1})$, and our goal is to choose $f := f_{0:T-1}$ such that we minimize
\[
\mathcal{J}_0(f) =
\ee^{f} \biggl( \sum_t
\bmat{\mathbf{x}\\\mathbf{u}}^\tp \bmat{Q & S \\ S^\tp & R} \bmat{\mathbf{x}\\ \mathbf{u}}
+ \mathbf{x}_T^\tp P_\textup{final} \mathbf{x}_T \biggr)
\]
The expectation is with respect to the joint probability measure on
$(\mathbf{x}_{0:T}, \mathbf{u}_{0:T-1})$ induced by the choice of~$f$.
\end{problem}
}

In both Problem~\ref{prob:TPLQG} and Problem~\ref{prob:CLQG}, the sizes of the various matrices and vectors may also vary with time. It is assumed that $\Sigma_\text{init},P_\text{final}$, as well as the values of $A,B,C,Q,R,S,U,V,W$ for all $t$, are available to all decision-makers for any $t \ge 0$.
We also clarify that while we often call the decision-making agents \emph{players}, this is not a game. The players are cooperative and their strategies are to be jointly optimized.

\section{Structural results}\label{sec:structural_results}

In Problem \ref{prob:TPLQG}, the lower triangular nature of the state, control and observation matrices of \eqref{eq:state_eqns} implies that  Player~1's state and control actions affect Player~2's information but not vice versa. Further, any information available of Player~1 is also available to Player~2. Hence, Problem \ref{prob:TPLQG} is \emph{partially nested} and the optimal strategies for the two players are linear functions of their respective information histories~\hbox{\cite[Thm.~2]{hochu}}.

In this section, we show that the information histories can be aggregated into sufficient statistics. In Sections~\ref{sec:explicit} and~\ref{sec:efficient}, we will use this fact to derive a recursive finite-memory implementation of the optimal controller.
We start with a well-known structural result for the centralized LQG problem (Problem \ref{prob:CLQG}).

\begin{lem}\label{lem:lqg_str}
In Problem \ref{prob:CLQG}, there exists an optimal control strategy of the form $u \eqt K z$
where $z_t := \eec{\mathbf{x}_t}{\mathbf{i}_t=i_t}$, and $K_{0:T-1}$ are fixed matrices of appropriate dimensions.
\end{lem}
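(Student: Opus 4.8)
The plan is to reduce Problem~\ref{prob:CLQG} to a fully observed LQG problem in the conditional mean $\mathbf{z}_t := \eec{\mathbf{x}_t}{\mathbf{i}_t}$ and then solve the latter by backward dynamic programming, exploiting a separation between estimation and control. The first step is to establish that $\mathbf{z}_t$ is a legitimate information state. Since each past input $\mathbf{u}_s$ for $s<t$ is itself part of the conditioning data $\mathbf{i}_t$, conditioning on $\mathbf{i}_t$ freezes all past inputs to known constants, so $\mathbf{x}_t$ is an affine function of the Gaussian primitives and the conditional law of the estimation error $\mathbf{e}_t := \mathbf{x}_t - \mathbf{z}_t$ is zero-mean Gaussian. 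I would run the standard Kalman recursion (accounting for the noise cross-covariance $U$) to show that $\mathbf{z}_+ \eqt A\mathbf{z} + B\mathbf{u} + L\boldsymbol{\nu}$, where $\boldsymbol{\nu}_t$ is the innovation, and that the error covariance $\Sigma_t := \eec{\mathbf{e}_t\mathbf{e}_t^\tp}{\mathbf{i}_t}$ evolves by a deterministic Riccati recursion independent of the choice of $f$. The crucial point is that the innovations $\boldsymbol{\nu}_t$ are independent of $\mathbf{i}_t$ and have a control-independent distribution.

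Second, I would split the cost using $\mathbf{x}_t = \mathbf{z}_t + \mathbf{e}_t$. Because $\mathbf{z}_t$ and $\mathbf{u}_t$ are $\mathbf{i}_t$-measurable while $\eec{\mathbf{e}_t}{\mathbf{i}_t}=0$, all cross terms between $\mathbf{e}_t$ and $(\mathbf{z}_t,\mathbf{u}_t)$---including those arising from the off-diagonal block $S$---vanish in expectation by the tower property. This yields
\[
\mathcal{J}_0(f) = \ee^f\biggl(\sum_t \bmat{\mathbf{z}\\\mathbf{u}}^\tp\bmat{Q&S\\S^\tp&R}\bmat{\mathbf{z}\\\mathbf{u}} + \mathbf{z}_T^\tp P_\textup{final}\mathbf{z}_T\biggr) + c,
\]
where the constant $c = \sum_t \tr(Q_t\Sigma_t) + \tr(P_\textup{final}\Sigma_T)$ depends only on the control-independent filter covariances. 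Minimizing $\mathcal{J}_0$ is therefore equivalent to minimizing the first term, which is governed entirely by the dynamics of $\mathbf{z}_t$ and the inputs $\mathbf{u}_t$.

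Third, I would solve this reduced problem by backward induction on a quadratic value function. Positing $V_t(z) = z^\tp P_t z + (\text{const})$ and using the $\mathbf{z}$-recursion from the first step, the minimization over $\mathbf{u}_t$ is an unconstrained quadratic in $\mathbf{u}_t$ for each realized value of $\mathbf{z}_t$; since $\boldsymbol{\nu}_t$ is zero-mean and independent of $(\mathbf{z}_t,\mathbf{u}_t)$, it contributes only to the additive constant and not to the minimizer. The minimizer is the linear feedback $\mathbf{u}_t = K_t\mathbf{z}_t$, with $K_t$ and $P_t$ given by the usual control Riccati recursion run backward from $P_T = P_\textup{final}$, which exhibits an optimal strategy of the claimed form $u \eqt Kz$.

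The main obstacle is the first step: rigorously decoupling estimation from control in the presence of arbitrary, a priori nonlinear, admissible policies. Care is needed to argue that the conditional error distribution, and in particular $\Sigma_t$, is genuinely policy-independent---this rests on the fact that feedback enters the error dynamics only through terms that cancel, so $\mathbf{e}_t$ is driven solely by the exogenous noises. Handling the process/measurement noise correlation $U$ in the filter, and confirming that the $S$-coupling does not reintroduce a dependence on $\mathbf{e}_t$ in the reduced cost, are the places where the bookkeeping must be done carefully.
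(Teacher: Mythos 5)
Your outline is correct: the paper offers no proof of its own for this lemma---it is stated as a well-known classical result, with the explicit version (Lemma~\ref{lem:centralized_explicit}) deferred to standard references---and your sketch is precisely the standard separation-principle argument found in those references, reducing to a fully observed problem in $\mathbf{z}_t$ and running backward dynamic programming. You also correctly isolate the one genuinely delicate step, namely the policy-independence of the conditional error law under arbitrary nonlinear admissible policies (obtained by subtracting the known effect of past inputs so that $\mathbf{e}_t$ is driven only by the exogenous Gaussian primitives), so nothing essential is missing.
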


We will also make use of some properties of conditional expectations, which apply because of the nested information. We state the result as a lemma.

\begin{lem}\label{lem:smoothing}
Suppose $\hat{\mathbf{i}}$ and  $\mathbf{i}$ are information sets that satisfy $\hat{\mathbf{i}} \subset \mathbf{i}$. Define conditional estimates
$z := \eec{\mathbf{x}}{\mathbf{i}=i}$ and
$\hat z := \eecs{\mathbf{x}}{\hat{\mathbf{i}}=\hat i\,}$. Then
\begin{enumerate}
\item [(i)] $\eecs{\hat{\,\mathbf{i}}}{\mathbf{i} = i\,} = \hat i$
\qquad (since $\hat i \subset i$)
\item [(ii)] $\eecs{\mathbf{z}}{\hat{\mathbf{i}}=\hat i\,} = \hat z $
\!\qquad (smoothing property)
\end{enumerate}
\end{lem}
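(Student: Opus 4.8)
The plan is to treat both claims as instances of elementary properties of conditional expectation, once the information-set inclusion $\hat{\mathbf{i}} \subset \mathbf{i}$ is read as the inclusion of the generated $\sigma$-algebras $\sigma(\hat{\mathbf{i}}) \subseteq \sigma(\mathbf{i})$. Throughout I would keep careful track of the distinction between the \emph{random} conditional expectation $\eec{\mathbf{x}}{\mathbf{i}}$, which is a $\sigma(\mathbf{i})$-measurable random vector and which I denote by the boldface $\mathbf{z}$, and its realization $z = \eec{\mathbf{x}}{\mathbf{i}=i}$ obtained by evaluating at $\mathbf{i}=i$. The same convention applies to $\hat{\mathbf{z}} = \eec{\mathbf{x}}{\hat{\mathbf{i}}}$ and $\hat z = \eecs{\mathbf{x}}{\hat{\mathbf{i}}=\hat i\,}$.

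For part~(i), the key observation is that $\hat{\mathbf{i}}$ is simply a sub-collection of the random variables making up $\mathbf{i}$; hence $\hat{\mathbf{i}}$ is a deterministic (measurable) function of $\mathbf{i}$, and in particular it is $\sigma(\mathbf{i})$-measurable. The conditional expectation of a $\sigma(\mathbf{i})$-measurable quantity given $\mathbf{i}$ returns that quantity itself, i.e. $\eecs{\hat{\mathbf{i}}}{\mathbf{i}} = \hat{\mathbf{i}}$. Evaluating at $\mathbf{i}=i$ then gives $\eecs{\hat{\mathbf{i}}}{\mathbf{i}=i} = \hat i$, which is the claim.

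For part~(ii), I would write $\mathbf{z} = \eec{\mathbf{x}}{\mathbf{i}}$ and apply the tower property (law of iterated expectations). Because $\sigma(\hat{\mathbf{i}}) \subseteq \sigma(\mathbf{i})$, conditioning on the finer information $\mathbf{i}$ and then on the coarser information $\hat{\mathbf{i}}$ collapses to conditioning on $\hat{\mathbf{i}}$ alone, so that $\eecs{\mathbf{z}}{\hat{\mathbf{i}}} = \eecs{\eec{\mathbf{x}}{\mathbf{i}}}{\hat{\mathbf{i}}} = \eec{\mathbf{x}}{\hat{\mathbf{i}}} = \hat{\mathbf{z}}$. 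Evaluating at $\hat{\mathbf{i}} = \hat i$ yields $\eecs{\mathbf{z}}{\hat{\mathbf{i}} = \hat i} = \hat z$, as required.

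The individual steps are entirely routine; the only real care needed is notational. The principal obstacle is bookkeeping: the paper's ``information set'' notation conflates a collection of random variables with the $\sigma$-algebra it generates, and the conditional-expectation notation is overloaded to mean both the random-variable version and its pointwise evaluation at a realization. I would therefore make this correspondence explicit at the outset, so that the measurability argument in~(i) and the tower-property argument in~(ii) can each be invoked in a single line without ambiguity.
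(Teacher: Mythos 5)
Your proof is correct and uses exactly the reasoning the paper intends: the paper states this lemma without proof, but its parenthetical annotations (``since $\hat i \subset i$'' and ``smoothing property'') point to precisely the measurability argument you give for (i) and the tower property you invoke for (ii). Your added care in distinguishing the random conditional expectation from its realization is a reasonable clarification of the paper's overloaded notation, not a departure from its approach.
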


\subsection{Structural result for Player~2}\label{sec:first_str}

We now turn our attention to Problem~\ref{prob:TPLQG}. 
Consider any arbitrary linear strategy for Player~1. Thus, Player~1's control actions are of the form
\begin{equation} \label{eq:assumed1}
 u^1 \eqt G \,\hat i
\end{equation}
where $G_{0:T-1}$ are fixed matrices of appropriate dimensions and $\hat i_t$ is the realization of Player~1's information. Given this strategy for Player~1, we want to find the optimal strategy for Player~2.

In the next two results, we show that once Player~1's strategy is fixed, finding the optimal strategy for Player~2 amounts to solving a centralized LQG problem. Thus we may apply the structural result presented in Lemma~\ref{lem:lqg_str}.

\begin{lem} \label{lem:newstate1}
Consider Problem~\ref{prob:TPLQG}, and assume any fixed strategy for Player~1 given by \eqref{eq:assumed1}. Define $\bar{\mathbf{x}}_t$ as follows.
\[
\bar{\mathbf{x}}_t := \bmat{\mathbf{x}_t \\ \hat{\mathbf{i}}_t}, \bar{\mathbf{y}}_t := \bmat{\mathbf{y}_t \\ \hat{\mathbf{i}}_t}
\quad\text{for}\quad
0\le t\le T
\]
Then, the following statements are true.
\begin{enumerate}
\item [(i)] There exist matrices $\bar A_t, \bar B_t, \bar C_t, \bar D_t$ such that 
\begin{align*}
\bar{\mathbf{x}}_0 &= \norml(0,\Sigma_\textup{init}) \\
 \bar{\mathbf{x}}_+ &\eqt \bar A \bar{\mathbf{x}} + \bar B \mathbf{u}^2
 		+ \bar D \bmat{ \mathbf{w} \\ \mathbf{v} } \\
 \bar{\mathbf{y}} &\eqt \bar C \bar{\mathbf{x}} + \mathbf{v}
\end{align*}
\item [(ii)] There exist matrices $\bar Q_t, \bar R_t, \bar S_t,\bar P_\textup{final}$ such that the total expected cost can be written as
\begin{align*}
\ee \left( \sum_t
\bmat{\bar{\mathbf{x}}\\\mathbf{u}^2}^\tp
\bmat{ \bar Q & \bar S \\ \bar S^\tp & \bar R }
\bmat{\bar{\mathbf{x}}\\\mathbf{u}^2}
+ \bar{\mathbf{x}}_T^\tp \bar P_\textup{final} \bar{\mathbf{x}}_T \right)
\end{align*}
\end{enumerate}
\end{lem}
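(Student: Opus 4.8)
The plan is to prove both parts by direct construction, exhibiting the augmented matrices explicitly and verifying the claimed recursions by substitution. The governing idea is that, once Player~1's policy $u^1 \eqt G\hat i$ is frozen, the information $\hat{\mathbf{i}}_t$ evolves according to a \emph{linear} recursion driven by the newly revealed measurement and the new (now known) control, so it may be folded into the state. Player~1's input then becomes a deterministic linear feedback of the augmented state and can be absorbed into the dynamics, leaving $\mathbf{u}^2$ as the only genuine control. The outcome is a standard centralized LQG model in the variables $\bar{\mathbf{x}},\bar{\mathbf{y}},\mathbf{u}^2$, to which Lemma~\ref{lem:lqg_str} will later apply.

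For part~(i), I would first record how $\hat{\mathbf{i}}_t$ propagates. Since $\hat{\mathbf{i}}_{t+1} = \hat{\mathbf{i}}_t \cup \{\mathbf{y}_t^1,\mathbf{u}_t^1\}$, and the block-triangular structure of $C$ gives $\mathbf{y}^1 \eqt C_{11}\mathbf{x}^1 + \mathbf{v}^1 = C_{11}E_1^\tp\mathbf{x} + E_1^\tp\mathbf{v}$ while $\mathbf{u}^1 \eqt G\hat{\mathbf{i}}$, the updated $\hat{\mathbf{i}}_{t+1}$ is a linear function of $(\mathbf{x}_t,\hat{\mathbf{i}}_t)$ plus a linear image of $\mathbf{v}_t$. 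Stacking this with the original state equation, in which I substitute $\mathbf{u} = (G\hat{\mathbf{i}},\mathbf{u}^2)$ so that $B\mathbf{u} \eqt (BE_1)G\hat{\mathbf{i}} + (BE_2)\mathbf{u}^2$, lets me read off $\bar A,\bar B,\bar D$ block by block; the measurement $\bar{\mathbf{y}} = (\,\mathbf{y},\hat{\mathbf{i}}\,)$ yields $\bar C$ after noting that $\hat{\mathbf{i}}$ is itself a block of $\bar{\mathbf{x}}$. Because $\hat{\mathbf{i}}_0$ is empty, $\bar{\mathbf{x}}_0 = \mathbf{x}_0 = \norml(0,\Sigma_\textup{init})$, matching the claim, and since $\bar D[\mathbf{w};\mathbf{v}]$ and the measurement noise are both linear images of the original jointly Gaussian noise, the augmented model has the stated form.

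For part~(ii), the same substitution $\mathbf{u}^1 \eqt G\hat{\mathbf{i}}$ turns the stage cost into a quadratic form in $(\bar{\mathbf{x}},\mathbf{u}^2)$: writing $(\mathbf{x},\mathbf{u}^1,\mathbf{u}^2) = M(\bar{\mathbf{x}},\mathbf{u}^2)$ for the fixed selection-and-feedback matrix $M$ built from identity blocks and $G$, I set $\bmat{\bar Q & \bar S \\ \bar S^\tp & \bar R} := M^\tp \bmat{Q & S \\ S^\tp & R} M$. The terminal term is handled by observing that $\mathbf{x}_T$ is a sub-block of $\bar{\mathbf{x}}_T$, so $\bar P_\textup{final}$ is $P_\textup{final}$ padded with zeros in the $\hat{\mathbf{i}}$ coordinates. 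Collecting the stage and terminal contributions reproduces the stated cost under the same expectation.

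The proof is essentially bookkeeping, so the obstacles are organizational rather than mathematical. The first is tracking the time-growing dimension of $\hat{\mathbf{i}}_t$ and keeping the block partitions of $\bar A,\bar B,\bar C,\bar D$ mutually consistent across the stacking. The second, and the point I would watch most carefully, is the noise: the augmented measurement noise is a zero-padded (hence rank-deficient) image of $\mathbf{v}$ that is correlated with the augmented process noise $\bar D[\mathbf{w};\mathbf{v}]$ through the cross-covariance $U$, and I would verify that this joint Gaussian structure is exactly of the type admitted by Problem~\ref{prob:CLQG}, so that the reduction to a centralized instance remains legitimate. The single fact that makes the whole construction go through, and which I would flag explicitly, is that $\mathbf{u}^1 \eqt G\hat{\mathbf{i}}$ depends on $\hat{\mathbf{i}}$ alone: this is precisely what allows Player~1's action to be absorbed as state feedback rather than appearing as an exogenous, unmodeled disturbance.
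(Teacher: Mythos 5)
Your construction is correct and follows exactly the route the paper intends: the paper's own proof is a one-line appeal to ``the definition of $\bar{\mathbf{x}}_t$, the state, observation, and cost equations, and the fixed strategy for Player~1,'' and you have simply filled in the bookkeeping it omits. Your explicit attention to the degenerate, cross-correlated augmented noise and to the role of $\mathbf{u}^1 \eqt G\hat{\mathbf{i}}$ as absorbable state feedback is a faithful (and slightly more careful) elaboration of the same argument.
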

\begin{proof}
The proof follows from the definition of $\bar{\mathbf{x}}_t$, the state, observation, and cost equations of Problem~\ref{prob:TPLQG}, and the fixed strategy for Player~1 given by~\eqref{eq:assumed1}. 
\end{proof}

\begin{thm} \label{thm:first_str}
Consider Problem~\ref{prob:TPLQG}. For any choice of Player~1's strategy, the optimal strategy for Player~2 has the structure
\begin{equation}
 u^2 \eqt 
 H^1\, \hat i + H^2 z
\end{equation}
where $z_t := \eec{\mathbf{x}_t}{\mathbf{i}_t=i_t}$ and $\hat i_t \subset i_t$ is the realization of Player~1's information. Further, $z_t$ has a linear update equation that does not depend on the choice of Player~1's strategy. This linear update equation is the standard Kalman filter, given explicitly in~\eqref{eq:centr_opt}--\eqref{eq:sol_Lt}.
\end{thm}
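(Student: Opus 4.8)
The plan is to reduce Player~2's problem to an instance of the centralized problem (Problem~\ref{prob:CLQG}) and then invoke Lemma~\ref{lem:lqg_str}. By Lemma~\ref{lem:newstate1}, once Player~1's strategy \eqref{eq:assumed1} is fixed, the augmented state $\bar{\mathbf{x}}_t = (\mathbf{x}_t,\hat{\mathbf{i}}_t)$ obeys linear Gaussian dynamics with output $\bar{\mathbf{y}}_t = (\mathbf{y}_t,\hat{\mathbf{i}}_t)$, single input $\mathbf{u}^2$, and a quadratic cost in $(\bar{\mathbf{x}},\mathbf{u}^2)$. What remains before Lemma~\ref{lem:lqg_str} applies is to check that Player~2's information $\mathbf{i}_t$ is exactly the natural information set $\bar{\mathbf{i}}_t := (\bar{\mathbf{y}}_{0:t-1},\mathbf{u}^2_{0:t-1})$ for this augmented centralized problem. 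First I would verify this equivalence: $\bar{\mathbf{y}}_{0:t-1}$ supplies $\mathbf{y}_{0:t-1}$ together with $\hat{\mathbf{i}}_{0:t-1}$, hence $\mathbf{u}^1_{0:t-2}$; the only component of $\mathbf{i}_t$ not obviously present is $\mathbf{u}^1_{t-1}$, but since Player~1's strategy is fixed and known, $u^1_{t-1} = G_{t-1}\hat i_{t-1}$ is a deterministic function of data already contained in $\bar{\mathbf{i}}_t$. Thus $\mathbf{i}_t$ and $\bar{\mathbf{i}}_t$ generate the same $\sigma$-algebra, and Player~2 faces precisely Problem~\ref{prob:CLQG} for the augmented system.

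Applying Lemma~\ref{lem:lqg_str} then yields an optimal strategy of the form $u^2 \eqt \bar K \bar z$, where $\bar z_t := \eec{\bar{\mathbf{x}}_t}{\mathbf{i}_t = i_t}$. The next step is to decompose this estimate along the block structure $\bar{\mathbf{x}}_t = (\mathbf{x}_t,\hat{\mathbf{i}}_t)$. Its $\mathbf{x}$-component is $\eec{\mathbf{x}_t}{\mathbf{i}_t=i_t} = z_t$ by definition, and its $\hat{\mathbf{i}}$-component is $\eec{\hat{\mathbf{i}}_t}{\mathbf{i}_t=i_t} = \hat i_t$ by Lemma~\ref{lem:smoothing}(i), since $\hat{\mathbf{i}}_t \subset \mathbf{i}_t$. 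Hence $\bar z_t = (z_t,\hat i_t)$, and partitioning $\bar K = \bmat{H^2 & H^1}$ conformably gives $u^2 \eqt H^1 \hat i + H^2 z$, as claimed.

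It remains to establish the update equation for $z_t$ and its independence of $G$. Rather than extract it from the filter for $\bar z_t$, whose dynamics do depend on $G$ through $\bar A$, I would argue directly that $z_t = \eec{\mathbf{x}_t}{\mathbf{i}_t = i_t}$ is simply the conditional mean of the original state \eqref{eq:state_eqns} given all measurements $\mathbf{y}_{0:t-1}$ and all inputs $\mathbf{u}_{0:t-1}$. Because every component of $\mathbf{u}_{0:t-1}$ is measurable with respect to $\mathbf{i}_t$, the inputs act as known signals that enter the state recursion purely additively, shifting the conditional mean without perturbing the estimation error. Consequently the error covariance recursion, and therefore the Kalman gain, depend only on $A,C,W,V,U$ and $\Sigma_\textup{init}$, never on how the inputs were generated. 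This recovers exactly the standard Kalman filter \eqref{eq:centr_opt}--\eqref{eq:sol_Lt}, which is manifestly independent of~$G$.

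The main obstacle is this final strategy-independence claim. The delicate point is that the augmented system of Lemma~\ref{lem:newstate1} genuinely depends on $G$, so one must resist reading $z_t$'s recursion off of $\bar z_t$'s and instead recognize $z_t$ as a filter for the original, strategy-free dynamics driven by a known input sequence. Making precise that known inputs shift the mean but leave the error covariance and innovations process untouched is the crux; everything else is bookkeeping enabled by the information equivalence established in the first step.
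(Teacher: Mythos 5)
Your proof is correct and follows essentially the same route as the paper: reduce Player~2's problem to an instance of Problem~\ref{prob:CLQG} via Lemma~\ref{lem:newstate1}, invoke Lemma~\ref{lem:lqg_str}, and split $\eec{\bar{\mathbf{x}}_t}{\mathbf{i}_t=i_t}$ into $(z_t,\hat i_t)$ using Lemma~\ref{lem:smoothing}. The two points you flag as delicate --- that $\mathbf{i}_t$ generates the same $\sigma$-algebra as the augmented problem's natural information set, and that the $z_t$ recursion is strategy-independent because measurable inputs shift the conditional mean without affecting the error covariance --- are exactly the details the paper's proof leaves implicit, and your treatment of them is sound.
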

\begin{proof}
Lemma \ref{lem:newstate1} implies that when Player~1's strategy is fixed, the optimization problem for Player~2 is an instance of the centralized LQG problem (Problem~\ref{prob:CLQG}) with $\bar{\mathbf{x}}_t$ as the state of the linear system, $\mathbf{y}_t$ as the observation, and $\mathbf{u}^2_t$ as the control action. Therefore, by Lemma~\ref{lem:lqg_str}, the optimal strategy for Player~2 is of the form $u^2_t = {H_t} \eec{\bar{\mathbf{x}}_t}{\mathbf{i}_t=i_t}$ for some matrix~$H_t$. Further, it follows from Lemma~\ref{lem:smoothing} that 
 \begin{align}
 &\eec{\bar{\mathbf{x}}_t}{\mathbf{i}_t=i_t}
 	= \bmat{\eec{\mathbf{x}_t}{\mathbf{i}_t=i_t} \\[1mm]
 		\eecs{\,\hat{\mathbf{i}}_t }{ \mathbf{i}_t=i_t}}
 	= \bmat{z_t \\[1mm] \hat{i}_t} \label{eq:ANthmproof1}
 \end{align}
 Therefore, the optimal strategy for Player~2 is of the form
 $u^2_t = {H_t} \eec{\bar{\mathbf{x}}_t}{\mathbf{i}_t=i_t}
 = H^1_t\,\hat{i}_t + H^{2}_t z_t$, as required.
\end{proof}

\subsection{Joint structural result}\label{sec:second_str}
We may rewrite the result of Theorem~\ref{thm:first_str} as
\begin{equation} \label{eq:coord1}
 u^2 \eqt \tilde u^2 + H^2 z
\end{equation}
where $\tilde u^{2}_t$ is  a linear function of Player~1's information. Note that $\tilde u^2_t$ and $u^1_t$ are linear functions of the same information. In order to further characterize the structure of optimal strategies, we consider a coordinated system where  a coordinator knows the common information among the players (that is,~$\hat i_t$) and selects both $\tilde u^{2}_t$ and $u^1_t$ based on this common information. Once the coordinator selects $\tilde u^{2}_t$, Player~2's control action is $u^2_t = H^{2}_t z_t + \tilde u^{2}_t$, for some $H^2_t$. It is clear that any strategy of the form~\eqref{eq:coord1} can be implemented in the coordinated system.

Given an arbitrary choice of $H^2_t$, we want to find the optimal strategy for the coordinator. As in~Lemma~\ref{lem:newstate1}, this can be formulated as a centralized LQG problem.
\begin{lem} \label{lem:newstate2}
Consider Problem~\ref{prob:TPLQG} where $u^2_t$ is given by~\eqref{eq:coord1}, and assume any fixed choice of $H^2_t$. Define~$\tilde{\mathbf{x}}_t$ as follows.
\[
\tilde{\mathbf{x}}_t := \bmat{\mathbf{x}_t \\ \mathbf{z}_t}
\quad\text{for}\quad
0\le t\le T
\]
Then, the following statements are true.
\begin{enumerate}
\item [(i)] There exist matrices $\tilde A_t$, $\tilde B_t$, $\tilde D_t$, and $\tilde \Sigma_\textup{init}$ such that 
\begin{align*}
\tilde{\mathbf{x}}_0 &= \norml(0,\tilde\Sigma_\textup{init}) \\
 \tilde{\mathbf{x}}_+ &\eqt \tilde A \tilde{\mathbf{x}} + \tilde B \bmat{\mathbf{u}^1 \\ \tilde{\mathbf{u}}^2} + \tilde D \bmat{ \mathbf{w} \\ \mathbf{v} } \\
 \mathbf{y}^1 &\eqt C_{11}\mathbf{x}^1 + \mathbf{v}^1
\end{align*}
\item [(ii)] There exist matrices $\Omega_t$ and $\Omega_\textup{final}$ such that the total expected cost can be written as
\begin{align*}
\ee \Biggl( \sum_t\bmat{\tilde{\mathbf{x}} \\ \mathbf{u}^1 \\ \tilde{\mathbf{u}}^2 }^\tp
 \!\Omega \bmat{\tilde{\mathbf{x}} \\ \mathbf{u}^1 \\ \tilde{\mathbf{u}}^2 }
+ \tilde{\mathbf{x}}_T^\tp\, \Omega_\textup{final}\, \tilde{\mathbf{x}}_T
 \Biggr)
\end{align*}
\end{enumerate}
\end{lem}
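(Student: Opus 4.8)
The plan is to construct the augmented system by pairing the state equation of Problem~\ref{prob:TPLQG} with the linear Kalman-filter recursion for $\mathbf{z}$ that Theorem~\ref{thm:first_str} already guarantees, then substitute the coordinated control law~\eqref{eq:coord1} and read off the required matrices by inspection. The guiding idea is to treat $\tilde{\mathbf{x}}_t=\bmat{\mathbf{x}_t\\\mathbf{z}_t}$ as the state of a new linear-Gaussian system driven by the controls $\bmat{\mathbf{u}^1\\\tilde{\mathbf{u}}^2}$ and the primitive noise $\bmat{\mathbf{w}\\\mathbf{v}}$, so that the statement reduces to identifying $\tilde A,\tilde B,\tilde D,\tilde\Sigma_\textup{init}$ and the cost weight $\Omega$. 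This mirrors the strategy used for Lemma~\ref{lem:newstate1}, but now the augmenting coordinate is the estimate rather than the common information.

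The first step is to write the two state recursions explicitly. By Theorem~\ref{thm:first_str} the estimate obeys the standard, policy-independent predictor $\mathbf{z}_+\eqt A\mathbf{z}+B\mathbf{u}+L(\mathbf{y}-C\mathbf{z})$ for some gain $L_t$. Substituting the measurement equation $\mathbf{y}\eqt C\mathbf{x}+\mathbf{v}$ eliminates $\mathbf{y}$ and couples the estimate to the true state, giving $\mathbf{z}_+\eqt LC\mathbf{x}+(A-LC)\mathbf{z}+B\mathbf{u}+L\mathbf{v}$. This coupling---the estimate is driven by the actual state through the innovation and by the measurement noise $\mathbf{v}$---is the reason $\mathbf{x}$ and $\mathbf{z}$ must be carried jointly, and it is the step that warrants the most care: it is precisely what lets the coordinator's observation be only $\mathbf{y}^1$ while $\mathbf{z}$ remains a hidden component of the augmented state rather than an externally supplied signal.

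Next I would fold in the control law~\eqref{eq:coord1}. Writing $\mathbf{u}=\bmat{\mathbf{u}^1\\\tilde{\mathbf{u}}^2}+\bmat{0\\H^2\mathbf{z}}$ and using the block-triangular structure of $B$, the feedback term becomes $\bmat{0\\B_{22}H^2}\mathbf{z}$, which is absorbed into the augmented dynamics as a feedback from the $\mathbf{z}$-block. Stacking the $\mathbf{x}_+$ and $\mathbf{z}_+$ equations then yields $\tilde A$ by inspection, together with $\tilde B=\bmat{B\\B}$ and $\tilde D=\bmat{I&0\\0&L}$. The observation $\mathbf{y}^1\eqt C_{11}\mathbf{x}^1+\mathbf{v}^1$ is just the top block-row of $\mathbf{y}\eqt C\mathbf{x}+\mathbf{v}$, using the zero in the $(1,2)$ block of $C$. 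For the initial condition, $\mathbf{z}_0=\eec{\mathbf{x}_0}{\mathbf{i}_0}=0$ since $\mathbf{i}_0$ is empty, so $\tilde\Sigma_\textup{init}=\bmat{\Sigma_\textup{init}&0\\0&0}$.

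Finally, for part~(ii) I would express the cost integrand through a single linear map. Because $\bmat{\mathbf{x}\\\mathbf{u}}=M\bmat{\tilde{\mathbf{x}}\\\mathbf{u}^1\\\tilde{\mathbf{u}}^2}$, where $M$ encodes the relation $\mathbf{u}^2=H^2\mathbf{z}+\tilde{\mathbf{u}}^2$, the running cost becomes $\bmat{\tilde{\mathbf{x}}\\\mathbf{u}^1\\\tilde{\mathbf{u}}^2}^\tp\Omega\bmat{\tilde{\mathbf{x}}\\\mathbf{u}^1\\\tilde{\mathbf{u}}^2}$ with $\Omega=M^\tp\bmat{Q&S\\S^\tp&R}M$, and the terminal cost becomes $\tilde{\mathbf{x}}_T^\tp\Omega_\textup{final}\tilde{\mathbf{x}}_T$ with $\Omega_\textup{final}=\bmat{P_\textup{final}&0\\0&0}$ after writing $\mathbf{x}_T=\bmat{I&0}\tilde{\mathbf{x}}_T$. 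Everything beyond the estimate recursion is block-matrix bookkeeping, so the only genuine obstacle is correctly closing the loop between $\mathbf{x}$ and $\mathbf{z}$ in the augmented dynamics.
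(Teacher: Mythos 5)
Your proposal is correct and takes the same route as the paper, whose proof of this lemma is a one-sentence appeal to the definitions and the policy-independent Kalman-filter recursion for $\mathbf{z}_t$; you have simply filled in the block-matrix bookkeeping (modulo the sign convention on $L$, which the paper defines so that the update reads $-L(y-Cz)$). Nothing further is needed.
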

\begin{proof}
The proof follows from the definition of $\tilde{\mathbf{x}}_t$,  the state, observation, and cost equations of Problem~\ref{prob:TPLQG}, and the fact that $z_t$ has a linear update equation. This linear update equation is the standard Kalman filter, given explicitly in~\eqref{eq:centr_opt}--\eqref{eq:sol_Lt}.
\end{proof}

\begin{thm}\label{thm:second_str}
The optimal strategies for the two players in Problem~\ref{prob:TPLQG} are of the form
\begin{align}
u^1_t = G_t \hat z_t \qquad  u^{2}_t = H^1_t \hat z_t + H^2_t z_t
\end{align}
where $\hat z_t := \eecs{\mathbf{x}_t }{ \hat{\mathbf{i}}_t=\hat i_t}$ and
 $z_t := \eecs{\mathbf{x}_t }{ {\mathbf{i}}_t= i_t}$.
\end{thm}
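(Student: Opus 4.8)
The plan is to reduce the joint optimization to the coordinator's problem introduced before~\eqref{eq:coord1}, show via Lemma~\ref{lem:newstate2} that this problem is an instance of the centralized LQG problem, and then apply the centralized structural result (Lemma~\ref{lem:lqg_str}) together with the smoothing property (Lemma~\ref{lem:smoothing}). The whole argument hinges on collapsing the relevant conditional mean to a single estimate $\hat z_t$.

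First I would recall from Theorem~\ref{thm:first_str} and the discussion leading to~\eqref{eq:coord1} that it is without loss of optimality to restrict attention to strategy pairs in which $u^2 \eqt \tilde u^2 + H^2 z$, with both $\tilde u^2_t$ and $u^1_t$ being linear functions of the common information $\hat i_t$. Fixing $H^2_t$, the remaining task is to choose the coordinator's maps producing $(u^1_t,\tilde u^2_t)$ from $\hat i_t$. By Lemma~\ref{lem:newstate2} this coordinator problem is a centralized LQG problem with augmented state $\tilde{\mathbf{x}}_t=(\mathbf{x}_t,\mathbf{z}_t)$, controls $(\mathbf{u}^1_t,\tilde{\mathbf{u}}^2_t)$, observation $\mathbf{y}^1_t$, and information $\hat{\mathbf{i}}_t$. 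Hence Lemma~\ref{lem:lqg_str} applies, and the coordinator's optimal strategy is $(u^1_t,\tilde u^2_t) = K_t\,\eec{\tilde{\mathbf{x}}_t}{\hat{\mathbf{i}}_t=\hat i_t}$ for some gain $K_t$.

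The key computation is to evaluate this conditional mean. Since $\tilde{\mathbf{x}}_t=(\mathbf{x}_t,\mathbf{z}_t)$,
\[
\eec{\tilde{\mathbf{x}}_t}{\hat{\mathbf{i}}_t=\hat i_t} = \bmat{\eec{\mathbf{x}_t}{\hat{\mathbf{i}}_t=\hat i_t} \\[1mm] \eec{\mathbf{z}_t}{\hat{\mathbf{i}}_t=\hat i_t}} = \bmat{\hat z_t \\[1mm] \hat z_t},
\]
where the top block is $\hat z_t$ by definition and the bottom block also equals $\hat z_t$ by the smoothing property of Lemma~\ref{lem:smoothing}(ii), since $\mathbf{z}_t=\eec{\mathbf{x}_t}{\mathbf{i}_t}$ with $\hat{\mathbf{i}}_t\subset\mathbf{i}_t$. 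Partitioning $K_t$ conformally and letting $G_t$ and $H^1_t$ be the appropriate sums of its block columns then yields $u^1_t = G_t\hat z_t$ and $\tilde u^2_t = H^1_t\hat z_t$; substituting into~\eqref{eq:coord1} gives $u^2_t = H^1_t\hat z_t + H^2_t z_t$.

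Finally I would observe that this structural form holds for \emph{every} fixed $H^2_t$, so the jointly optimal solution, obtained by additionally minimizing over $H^2_t$, retains the claimed form. The step I expect to be the main obstacle is the smoothing collapse: one must verify that $\mathbf{z}_t$ really is a legitimate state component whose $\hat{\mathbf{i}}_t$-conditional estimate reduces to $\hat z_t$. This relies both on the nestedness $\hat{\mathbf{i}}_t\subset\mathbf{i}_t$ and on the fact (from Theorem~\ref{thm:first_str}) that the $z_t$-recursion is a fixed Kalman filter independent of Player~1's policy, which is precisely what makes the augmented system in Lemma~\ref{lem:newstate2} genuinely linear and thus amenable to the centralized result.
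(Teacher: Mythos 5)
Your proposal is correct and follows essentially the same route as the paper's own proof: reduce to the coordinator's problem with $H^2_t$ fixed, invoke Lemma~\ref{lem:newstate2} to obtain a centralized LQG instance with augmented state $(\mathbf{x}_t,\mathbf{z}_t)$, apply Lemma~\ref{lem:lqg_str}, and collapse both blocks of the conditional mean to $\hat z_t$ via Lemma~\ref{lem:smoothing}. Your added remark that the structural form persists under the outer minimization over $H^2_t$ is a useful explicit closing step that the paper leaves implicit.
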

\begin{proof}
Lemma \ref{lem:newstate2} implies that when $H^2_t$ is fixed in~\eqref{eq:coord1}, the coordinator's optimization problem is an instance of Problem~\ref{prob:CLQG} with $\tilde{\mathbf{x}}_t$ as the state of the linear system and $( \mathbf{u}^1_t, \tilde{\mathbf{u}}^2_t )$
as the control action. By Lemma~\ref{lem:lqg_str}, we obtain 
\begin{align}
\bmat{ u^1_t \\ \tilde u^{2}_t} = \tilde H_t \eecs{\tilde{\mathbf{x}}_t}{ \hat{\mathbf{i}}_t= \hat i_t}
= \tilde H_t \eec{\bmat{\mathbf{x}_t \\ \mathbf{z}_t}}{\hat{\mathbf{i}}_t=\hat i_t}
\end{align}
for some matrix $\tilde H_t$. The first component of the expectation is simply $\hat z_t$, and the second component is also $\hat z_t$ by Lemma~\ref{lem:smoothing}. Therefore,~$u^1_t$ and~$\tilde u^2_t$ are linear functions of $\hat z_t$, Player~1's estimate.
\end{proof}
Theorem~\ref{thm:second_str} shows that for Problem~\ref{prob:TPLQG}, a sufficient statistic is the set of conditional means $(\hat z_t, z_t)$.  Note that for a given realization of $\mathbf{i}_t$, Player~2's estimate $z_t$ does not depend on players' strategies. However, for a given realization of $\hat{\mathbf{i}}_t$, Player~1's estimate $\hat z_t$ depends on the choice of matrices $H^2_{0:t-1}$.

\subsection{Extension to the POMDP case}\label{sec:discussion}

In the centralized LQG problem (Problem~\ref{prob:CLQG}), the fact that the optimal control action is a linear function of the conditional mean of the state is a consequence of the LQG assumptions. If state update and measurement equations are nonlinear with non-Gaussian noise, the centralized problem becomes a partially observable Markov decision process (POMDP). For POMDPs, optimal actions are functions of the \emph{conditional probability density of the state} and not just the conditional mean. In other words:
\begin{align*}
\text{LQG: } & & u_t &= K_t z_t
	 & &\text{where } & z_t &= \eecs{\mathbf{x}_t}{\mathbf{i}_t = i_t}\\
\text{POMDP: } & & u_t &= \phi_t( \pi_t )
	 & &\text{where } & \pi_t &= \probcs{\mathbf{x}_t}{\mathbf{i}_t = i_t}
\end{align*}
where $\phi_t$ is a (possibly) nonlinear function, and $\pi_t$ is called the \emph{belief state}. Note that $\pi_t$ is a probability density function while $z_t$ is simply a real vector.

For the two-player problem (Problem~\ref{prob:TPLQG}), the simple form of the structural results in Theorems~\ref{thm:first_str} and~\ref{thm:second_str} is a consequence of the triangular information structure and the LQG assumptions. We now investigate the corresponding two-player structural result for POMDPs.

A straightforward extension of Lemma~\ref{lem:newstate1} shows that for any choice of Player~1's strategy, Player~2's optimization problem is a POMDP and the optimal strategy has the form
$ u^2_t = \gamma_t(\pi_t, \hat{i}_t) $
where $\gamma_t$ is a (possibly) nonlinear function.
%
%
Because Player~2's optimal policy is no longer linear, the coordinator's problem of Section~\ref{sec:second_str} is more complicated. The coordinator is now required to select a control action for Player~1 and a \emph{function that maps Player~2's belief to Player~2's action}. The coordinator's problem can be viewed as a POMDP with $(\mathbf{x}_t, \boldsymbol\pi_t)$ as the state. Therefore, the associated structural result involves a belief on the pair $(\mathbf{x}_t, \boldsymbol\pi_t)$. Not only is the coordinator required to keep a belief on the state, it must also keep a \emph{belief on Player~2's belief on the state}.

As shown in Section~\ref{sec:structural_results}, the structures of the optimal strategies for the centralized and two-player problems in the LQG case are of comparable complexity. However, this is not the case for the nonlinear, non-Gaussian versions of these problems. The centralized case requires maintaining a belief on the system state, while the two-player case requires maintaining a belief on a belief. This is substantially more complicated object.


\section{Explicit solution}\label{sec:explicit}

In this section, we use the structural results of Section~\ref{sec:structural_results} to derive an explicit and efficiently computable state-space realization for the optimal controller for Problem~\ref{prob:TPLQG}.

To ensure a unique optimal controller with a recursively computable structure, we make some additional mild assumptions, which we list below.

\paragraph{Main assumptions.}
We assume the following.
\begin{align}\label{ass1}
\bmat{W & U^\tp \\ U & V} &\geqt 0,&
\Sigma_\textup{init} &\ge 0,&
&\text{and}&
\quad V &\gt 0 \\
\bmat{Q & S \\ S^\tp & R} &\geqt 0,&
P_\textup{final} &\ge 0,&
&\text{and}&
\quad R &\gt 0 \label{ass2}
\end{align}
The assumptions that $V_t > 0$ and $R_t > 0$ are made for simplicity and can generally be relaxed. For example, it is only required that $C_t \Sigma_t C_t^\tp + V_t > 0$, so as long as this holds, we can have $V_t \ge 0$.

The well-known solution to the centralized LQG problem (Problem~\ref{prob:CLQG}) in given in the following lemma.

\begin{lem}\label{lem:centralized_explicit}
Consider Problem~\ref{prob:CLQG} and suppose the main assumptions~\eqref{ass1}--\eqref{ass2} hold.
The optimal policy is
\begin{equation}\label{eq:centr_opt}
\begin{aligned}
z_0 &= 0\\
z_+ &\eqt A z + B u - L (y - Cz) \\
u &\eqt K z
\end{aligned}
\end{equation}
where $L_{0:T-1}$ satisfies the forward recursion
\begin{equation}\label{eq:sol_Lt}
\begin{aligned}
\Sigma_0 &= \Sigma_\textup{init} \\
\Sigma_+ &\eqt A\Sigma A^\tp + L(C\Sigma A^\tp + U) + W \\
L &\eqt -(A\Sigma C^\tp + U^\tp) (C\Sigma C^\tp + V)^{-1}
\end{aligned}
\end{equation}
and $K_{0:T-1}$ satisfies the backward recursion
\begin{equation}\label{eq:sol_Kt}
\begin{aligned}
P_T &= P_\textup{final} \\
P &\eqt A^\tp P_+ A + (A^\tp P_+ B + S) K + Q \\
K &\eqt -(B^\tp P_+ B + R)^{-1} (B^\tp P_+ A + S^\tp)
\end{aligned}
\end{equation}
For every $t$, the belief state has the distribution
\begin{equation}\label{eq:belief_centralized}
\probc{\mathbf{x}_t}{\mathbf{i}_t = i_t} = \norml(z_t, \Sigma_t)
\end{equation}
and the optimal average cost is given by
\if\MODE3
\begin{equation}
\label{eq:cost_centralized}
\mathcal{J}_0 =
\tr(P_0 \Sigma_\textup{init})
+\sum_t \Bigl(\, \tr(P_+ W) + \tr\bigl[ \Sigma K^\tp(B^\tp P_+ B + R) K\bigr] \,\Bigr)
\end{equation}
\else
\begin{multline}
\label{eq:cost_centralized}
\mathcal{J}_0 =
 \tr(P_0 \Sigma_\textup{init}) \\
+\sum_t \Bigl(\, \tr(P_+ W) + \tr\bigl[ \Sigma K^\tp(B^\tp P_+ B + R) K\bigr] \,\Bigr)
\end{multline}
\fi
\end{lem}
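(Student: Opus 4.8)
The plan is to prove the lemma by the separation principle, splitting it into an \emph{estimation} part (the forward recursion) and a \emph{control} part (the backward recursion), and to organize the control part around a single completion-of-squares identity that simultaneously proves optimality of $u = Kz$ and produces the cost formula~\eqref{eq:cost_centralized}. We may lean on Lemma~\ref{lem:lqg_str} to assert the optimal policy is linear in $z_t$, but the completion-of-squares argument actually delivers this optimality directly, so I will present it self-contained.

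First I would handle estimation. For any admissible policy each $\mathbf{u}_s$ is a deterministic function of $\mathbf{i}_s \subseteq \mathbf{i}_t$ for $s<t$, so conditioning on $\mathbf{i}_t$ is equivalent to conditioning on $\mathbf{y}_{0:t-1}$ and the known control contributions enter the state recursion additively. Since the primitives are jointly Gaussian, the conditional law of $\mathbf{x}_t$ given $\mathbf{i}_t$ is Gaussian, which establishes $\probc{\mathbf{x}_t}{\mathbf{i}_t=i_t} = \norml(z_t,\Sigma_t)$ in~\eqref{eq:belief_centralized}; the standard measurement- and time-update steps then yield the mean recursion~\eqref{eq:centr_opt} for $z$ and the recursions~\eqref{eq:sol_Lt} for $(\Sigma,L)$. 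The crucial output of this step, which drives everything else, is that $\Sigma_t = \covc{\mathbf{x}_t}{\mathbf{i}_t=i_t}$ is \emph{deterministic and independent of the control policy}, and that the error $\mathbf{e}_t := \mathbf{x}_t - z_t$ satisfies $\eecs{\mathbf{e}_t}{\mathbf{i}_t}=0$ and $\covc{\mathbf{e}_t}{\mathbf{i}_t=i_t}=\Sigma_t$.

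Then I would set up the control part. Let $P_t,K_t$ be defined by~\eqref{eq:sol_Kt} and put $M := B^\tp P_+ B + R$, which is positive definite since $R\gt 0$ and $P_+\ge 0$ (the latter by induction from $P_\textup{final}\ge 0$). Expanding the stage cost, completing the square in $\mathbf{u}$, and using the Riccati identity~\eqref{eq:sol_Kt} gives, with $\mathbf{x}_+ = A\mathbf{x}+B\mathbf{u}+\mathbf{w}$,
\[
\begin{aligned}
\mathbf{x}^\tp Q\mathbf{x} + 2\mathbf{x}^\tp S\mathbf{u} + \mathbf{u}^\tp R\mathbf{u}
={}& \mathbf{x}^\tp P\mathbf{x} - \mathbf{x}_+^\tp P_+\mathbf{x}_+ + (\mathbf{u}-K\mathbf{x})^\tp M(\mathbf{u}-K\mathbf{x}) \\
&{}+ 2\mathbf{w}^\tp P_+(A\mathbf{x}+B\mathbf{u}) + \mathbf{w}^\tp P_+\mathbf{w}.
\end{aligned}
\]
Summing over $t$, adding the terminal term $\mathbf{x}_T^\tp P_\textup{final}\mathbf{x}_T$, telescoping the quadratic-in-state terms down to $\mathbf{x}_0^\tp P_0\mathbf{x}_0$, taking expectations, and using that $\mathbf{w}_t$ is independent of the non-anticipative pair $(\mathbf{x}_t,\mathbf{u}_t)$ to kill the cross terms, this yields for \emph{every} admissible policy
\[
\mathcal{J}_0 = \tr(P_0\Sigma_\textup{init}) + \sum_t \tr(P_+ W) + \sum_t \ee\bigl[(\mathbf{u}-K\mathbf{x})^\tp M(\mathbf{u}-K\mathbf{x})\bigr].
\]

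Finally I would minimize. Only the last sum is policy-dependent. Writing $\mathbf{u}_t - K_t\mathbf{x}_t = (\mathbf{u}_t - K_t z_t) - K_t\mathbf{e}_t$ and conditioning on $\mathbf{i}_t$, Step~1 gives $\ee[(\mathbf{u}-K\mathbf{x})^\tp M(\mathbf{u}-K\mathbf{x})] = \ee[(\mathbf{u}-Kz)^\tp M(\mathbf{u}-Kz)] + \tr(\Sigma K^\tp M K) \ge \tr(\Sigma K^\tp M K)$, the cross term vanishing because $\mathbf{u}_t,z_t$ are $\mathbf{i}_t$-measurable while $\eecs{\mathbf{e}_t}{\mathbf{i}_t}=0$, and the inequality following from $M\gt 0$, with equality iff $\mathbf{u}_t = K_t z_t$. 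Substituting $\tr(\Sigma K^\tp M K)=\tr[\Sigma K^\tp(B^\tp P_+ B+R)K]$ and noting $z_0=0$ recovers~\eqref{eq:cost_centralized}. The step I expect to be the main obstacle is precisely guarding this term-by-term minimization against circularity, since $\mathbf{x}_t$ and $z_t$ both depend on the entire policy: this is resolved by the control-independence of $\Sigma_t$ from Step~1 together with conditioning on $\mathbf{i}_t$, so that the identity above holds for all admissible policies, each summand is lower-bounded by the constant $\tr(\Sigma_t K_t^\tp M_t K_t)$, and the single admissible law $\mathbf{u}_t=K_t z_t$ attains every bound at once and is therefore optimal.
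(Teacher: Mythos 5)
Your proof is correct: the policy-independence of the Kalman covariance $\Sigma_t$, the completion-of-squares identity built on the Riccati recursion~\eqref{eq:sol_Kt}, and the orthogonality of the innovation error $\mathbf{e}_t$ to $\mathbf{i}_t$-measurable quantities together give exactly the classical separation argument, and you correctly handle the two delicate points (the correlated noise $U$ in the predictor form of~\eqref{eq:sol_Lt}, and the non-circularity of the term-by-term minimization). The paper does not prove this lemma itself but simply cites standard references, and your argument is precisely the textbook proof those references contain, so there is nothing to compare beyond noting that you have supplied in full what the paper delegates.
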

\begin{proof}
See for example \cite{kumar_varaiya} or \cite{astrom}.
\end{proof}

Note that Lemma~\ref{lem:centralized_explicit} holds in great generality. All system, cost, and covariance matrices may vary with time. The above formulae hold even in the case where the dimensions of the matrices are different at every timestep.

\if\MODE3\else\newpage\fi
The main result of this section is a state-space solution to Problem~\ref{prob:TPLQG}, the two-player problem. The result, given below in Theorem~\ref{thm:tpof_explicit}, is similar in structure and generality to Lemma~\ref{lem:centralized_explicit}, but with one important difference. In Lemma~\ref{lem:centralized_explicit}, the gains~$K_{0:T-1}$ and~$L_{0:T-1}$ can be computed separately using different recursions. Thus, Lemma~\ref{lem:centralized_explicit} provides both a solution and a recipe for its construction. In contrast, the recursions for~$\hat K_{0:T-1}$ and~$\hat L_{0:T-1}$ found in Theorem~\ref{thm:tpof_explicit} are coupled. Therefore, Theorem~\ref{thm:tpof_explicit} provides an implicitly defined solution, but no obvious construction method. In Section~\ref{sec:efficient}, we make our solution explicit by showing how the various gains described in Theorem~\ref{thm:tpof_explicit} can be efficiently computed.

\begin{thm}\label{thm:tpof_explicit}
Consider Problem~\ref{prob:TPLQG} and suppose the main assumptions~\eqref{ass1}--\eqref{ass2} hold. The optimal policy is
\begin{align}\label{eq:tpof_opt_1}
&\begin{aligned}
\hat z_0 &= 0 \\
\hat z_+ &\eqt A\hat z + B\hat u - \hat L( y - C\hat z ) \\
\hat u &\eqt K \hat z
\end{aligned} \\[1mm]
&\begin{aligned}\label{eq:tpof_opt_2}
z_0 &= 0 \\
z_+ &\eqt Az + Bu - L( y - Cz ) \\
u &\eqt K \hat z + \hat K ( z - \hat z )
\end{aligned}
\end{align}
where $L_{0:T-1}$ and $K_{0:T-1}$ satisfy~\eqref{eq:sol_Lt}--\eqref{eq:sol_Kt}. If we define $\hat A :\eqt A+B\hat K + \hat L C$, then $\hat L_{0:T-1}$ satisfies the recursion
\if\MODE3
\begin{equation}\label{sigma_hat}
\begin{aligned}
\hat \Sigma_0 &= \Sigma_\textup{init} \\
\hat \Sigma_+ &\eqt \Sigma_+ + \hat A (\hat \Sigma - \Sigma) \hat A^\tp
	+ (\hat L - L ) (C\Sigma C^\tp + V) (\hat L - L )^\tp  \\
\hat L &\eqt -\bigl(A\hat\Sigma C^\tp + U^\tp + B\hat K (\hat\Sigma-\Sigma)C^\tp\bigr)
	 E_1 (C_{11} \hat\Sigma^{11} C_{11}^\tp + V_{11})^{-1}E_1^\tp
\end{aligned}
\end{equation}
\else
\begin{equation}\label{sigma_hat}
\begin{aligned}
\hat \Sigma_0 &= \Sigma_\textup{init} \\
\hat \Sigma_+ &\eqt \Sigma_+ + \hat A (\hat \Sigma - \Sigma) \hat A^\tp \\
	&\hspace{12mm} + (\hat L - L ) (C\Sigma C^\tp + V) (\hat L - L )^\tp  \\
\hat L &\eqt -\bigl(A\hat\Sigma C^\tp + U^\tp + B\hat K (\hat\Sigma-\Sigma)C^\tp\bigr) \\
	&\hspace{12mm}\times E_1 (C_{11} \hat\Sigma^{11} C_{11}^\tp + V_{11})^{-1}E_1^\tp
\end{aligned}
\end{equation}
\fi
and $\hat K_{0:T-1}$ satisfies the recursion
\if\MODE3
\begin{equation} \label{P_hat}
\begin{aligned}
\hat P_T &= P_\textup{final} \\
\hat P &\eqt P + \hat A^\tp (\hat P_+ - P_+) \hat A
	+ (\hat K - K )^\tp (B^\tp P_+ B + R) (\hat K - K ) \\
\hat K &\eqt - E_2 (B_{22}^\tp \hat P^{22}_+ B_{22} + R_{22})^{-1}E_2^\tp 
	 \bigl( B^\tp \hat P_+ A + S^\tp + B^\tp (\hat P_+ - P_+)\hat L C \bigr)
\end{aligned}
\end{equation}
\else
\begin{equation} \label{P_hat}
\begin{aligned}
\hat P_T &= P_\textup{final} \\
\hat P &\eqt P + \hat A^\tp (\hat P_+ - P_+) \hat A \\
	&\hspace{12mm} + (\hat K - K )^\tp (B^\tp P_+ B + R) (\hat K - K ) \\
\hat K &\eqt - E_2 (B_{22}^\tp \hat P^{22}_+ B_{22} + R_{22})^{-1}E_2^\tp \\
	&\hspace{12mm} \times \bigl( B^\tp \hat P_+ A + S^\tp + B^\tp (\hat P_+ - P_+)\hat L C \bigr)
\end{aligned}
\end{equation}
\fi
where $E_i$ are matrices defined in Section~\ref{sec:notation}.
For every $t$, the belief states have the distributions
\begin{equation}\label{eq:belief}
\begin{aligned}
\probcs{\mathbf{x}_t}{\,\hat{\mathbf{i}}_t = \hat i_t} &= \norml(\hat z_t, \hat \Sigma_t) \\
\probc{\mathbf{x}_t}{\,\mathbf{i}_t = i_t} &= \norml(z_t, \Sigma_t)
\end{aligned}
\end{equation}
and the optimal average cost is given by
\if\MODE3
\begin{multline}
\label{eq:cost_twoplayer}
\hat{\mathcal{J}}_0 =
\tr(P_0 \Sigma_\textup{init})
+\sum_t \Bigl(\, \tr(P_+ W) + \tr\bigl[ \Sigma K^\tp(B^\tp P_+ B + R) K\bigr] \\
+ \tr\bigl[ (\hat\Sigma-\Sigma) (\hat K-K)^\tp\! (B^\tp P_+ B + R)(\hat K-K)\bigr] \Bigr)
\end{multline}
\else
\begin{multline}
\label{eq:cost_twoplayer}
\hat{\mathcal{J}}_0 =
\tr(P_0 \Sigma_\textup{init}) \\[3pt]
+\sum_t \Bigl(\, \tr(P_+ W) + \tr\bigl[ \Sigma K^\tp(B^\tp P_+ B + R) K\bigr] \\[-3pt]
+ \tr\bigl[ (\hat\Sigma-\Sigma) (\hat K-K)^\tp\! (B^\tp P_+ B + R)(\hat K-K)\bigr] \Bigr)
\end{multline}
\fi
\end{thm}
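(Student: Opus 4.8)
The plan is to combine the structural results of Section~\ref{sec:structural_results} with a single change of variables that splits Problem~\ref{prob:TPLQG} into a standard centralized problem and an auxiliary matrix optimal-control problem for the unknown gains. I keep the two conditional means $z_t := \eec{\mathbf{x}_t}{\mathbf{i}_t = i_t}$ and $\hat z_t := \eecs{\mathbf{x}_t}{\hat{\mathbf{i}}_t = \hat i_t}$, Player~2's private refinement $\boldsymbol\delta := \mathbf{z} - \hat z$, and the estimation error $\mathbf{e} := \mathbf{x} - \mathbf{z}$. Since $\hat{\mathbf{i}}\subset\mathbf{i}$, Lemma~\ref{lem:smoothing} gives $\eecs{\mathbf{z}}{\hat{\mathbf{i}}} = \hat z$, so $\hat z$, $\boldsymbol\delta$, $\mathbf{e}$ are mutually uncorrelated and $z$ is the ordinary centralized Kalman filter \eqref{eq:centr_opt}--\eqref{eq:sol_Lt}, whose $L$ and $\Sigma$ are fixed independently of the strategies. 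By Theorem~\ref{thm:second_str} the control may be taken as $\mathbf{u} = K\hat z + \hat K\boldsymbol\delta = K\mathbf{z} + (\hat K - K)\boldsymbol\delta$; the $\hat z$-feedback is exactly the centralized $K$ because it drops out of the $\boldsymbol\delta$-dynamics below and otherwise only adds a semidefinite penalty, leaving only the gains $\hat K$ (with a zero Player-1 block) and $\hat L$ (acting on $\mathbf{y}^1$ only) to be chosen.

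For the forward recursion I would subtract the $\hat z$- and $z$-updates; using $\mathbf{y} - C\hat z = (\mathbf{y} - C\mathbf{z}) + C\boldsymbol\delta$ gives $\boldsymbol\delta_+ = \hat A\boldsymbol\delta + (\hat L - L)(\mathbf{y} - C\mathbf{z})$ with $\hat A = A + B\hat K + \hat L C$ exactly as defined, where the centralized innovation $\mathbf{y} - C\mathbf{z}$ is white, uncorrelated with $\boldsymbol\delta$, and has covariance $C\Sigma C^\tp + V$. Propagating $\hat\Sigma - \Sigma = \cov(\boldsymbol\delta)$ through this linear recursion produces \eqref{sigma_hat}. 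Because $\hat z$ is a conditional mean formed from $\mathbf{y}^1$ alone, the gain $\hat L$ must be the corresponding constrained minimum-variance gain; as this minimizes $\hat\Sigma_+$ in the semidefinite order, carrying out the minimization over gains acting on $\mathbf{y}^1$ yields the $E_1$-projected formula with the block $C_{11}\hat\Sigma^{11}C_{11}^\tp + V_{11}$.

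The cost formula and backward recursion come together. Writing $\mathbf{u} = K\mathbf{z} + \boldsymbol\eta$ with $\boldsymbol\eta := (\hat K - K)\boldsymbol\delta$ and completing the square against the centralized Riccati equation \eqref{eq:sol_Kt} gives the standard LQG identity that perturbing the optimal centralized feedback by $\boldsymbol\eta$ raises the cost by exactly $\ee\sum_t\boldsymbol\eta_t^\tp(B^\tp P_+ B + R)\boldsymbol\eta_t$; taking expectations with $\cov(\boldsymbol\delta_t) = \hat\Sigma_t - \Sigma_t$ turns this into the extra term of \eqref{eq:cost_twoplayer}, so that \eqref{eq:cost_twoplayer} is \eqref{eq:cost_centralized} plus this term, and the optimal gains are those minimizing $\sum_t\tr[(\hat\Sigma-\Sigma)(\hat K-K)^\tp(B^\tp P_+ B + R)(\hat K-K)]$ subject to \eqref{sigma_hat}. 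Treating $M_t := \hat\Sigma_t - \Sigma_t$ as the state of this auxiliary problem, its value function is $\tr[M_t(\hat P_t - P_t)]$ up to an additive constant, with $\hat P_T - P_T = 0$, and one dynamic-programming step gives the Riccati recursion \eqref{P_hat}; the stationarity condition in $\hat K$, restricted to the Player-2 block (forcing the leading $E_2$ and the blocks $B_{22}, R_{22}$), produces the stated $\hat K$, including the coupling term $B^\tp(\hat P_+ - P_+)\hat L C$ that appears because $\hat A$ carries $\hat L$. The belief-state claim \eqref{eq:belief} is then immediate from Gaussianity and the definitions of $z$ and $\hat z$.

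The main obstacle is the coupling of the forward and backward recursions through $\hat A = A + B\hat K + \hat L C$: the covariance $\hat\Sigma$ of Player~2's private information depends on the control correction $\hat K$, while the optimal $\hat K$ depends on $\hat\Sigma - \Sigma$, so unlike the centralized separation principle the two recursions cannot be solved independently. The delicate points are verifying that the constrained minimum-variance choice of $\hat L$ coincides with the cost-minimizing choice (so that the conditional-mean definition of $\hat z$ is consistent with optimality of the gains), and propagating the $E_1$ and $E_2$ block constraints imposed by the triangular information structure correctly through the stationarity conditions.
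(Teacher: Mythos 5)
Your forward half is sound and matches the paper's in substance: the identity $u - Kz = (\hat K - K)\boldsymbol\delta$, the dynamics $\boldsymbol\delta_+ = \hat A\boldsymbol\delta + (\hat L - L)(\mathbf y - C\mathbf z)$ driven by the white centralized innovation, the covariance propagation giving \eqref{sigma_hat}, and the completion-of-squares identity $\hat{\mathcal J}_0 = \mathcal J_0 + \ee\sum_t (u-Kz)^\tp(B^\tp P_+B+R)(u-Kz)$ yielding \eqref{eq:cost_twoplayer} are all correct (the paper reaches \eqref{sigma_hat} equivalently by running Lemma~\ref{lem:centralized_explicit} on the augmented state $(\mathbf z,\mathbf e)$ with observation $\mathbf y^1$ and verifying by induction that the covariance is $\mathrm{diag}(\hat\Sigma-\Sigma,\Sigma)$).

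The gap is in the backward half. You propose to treat $M_t := \hat\Sigma_t - \Sigma_t$ as the state of a deterministic auxiliary problem with stage cost $\tr[M_t(\hat K_t-K_t)^\tp(B^\tp P_+B+R)(\hat K_t-K_t)]$ and to verify the value-function ansatz $V_t(M)=\tr[M(\hat P_t-P_t)]+c_t$ by one Bellman step. This does not close, because the decision variables at each stage are \emph{both} $\hat K_t$ and $\hat L_t$, and $\hat L_t$ enters the $M$-dynamics multiplying the state (through $\hat A = A+B\hat K+\hat L C$) as well as additively (through $(\hat L-L)N(\hat L-L)^\tp$). The stage subproblem is therefore a quadratic in $(\hat K_t,\hat L_t)$ whose Hessian and linear term both depend affinely on $M$; its minimizer (in particular the stationarity condition for $\hat L_t$, which involves $\hat\Sigma = M+\Sigma$) depends on $M$, so the minimized value is not affine in $M$ and the ansatz cannot be propagated by induction. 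The same obstruction appears if you instead slave $\hat L$ to the minimum-variance gain: the resulting $M$-dynamics become Riccati-type, again destroying affineness. What your argument actually delivers is the \emph{stationarity} conditions \eqref{sigma_hat}--\eqref{P_hat}, but since the joint problem is nonconvex in the pair $(\hat K,\hat L)$ (products of gains appear), stationarity alone does not certify optimality — and you flag this as a ``delicate point'' without resolving it. The paper's resolution is a person-by-person optimality argument: with $\hat K_{0:T-1}$ frozen, the coordinator's problem is a genuine centralized LQG (state $(\mathbf z,\mathbf e)$, observation $\mathbf y^1$) whose exact solution via Lemma~\ref{lem:centralized_explicit} is \eqref{sigma_hat} and $\tilde u \eqt (K-\hat K)\hat z$; with $\hat L_{0:T-1}$ frozen, Player~2's best response is another centralized LQG (state $(\mathbf x,\hat{\mathbf e})$, full observation) whose Riccati recursion is $\mathrm{diag}(P,\hat P-P)$, giving \eqref{P_hat} and $\bar u = E_2^\tp\hat K(z-\hat z)$; partial nestedness then upgrades this mutual best response to global optimality. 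You need either that argument or an independent convexity/sufficiency certificate for the coupled stationary point; as written, the proof establishes necessity of the recursions but not optimality of the resulting policy.
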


\begin{proof}
Note that $\hat LE_2 = 0$ and $E_1^\tp \hat K=0$. That is, the second block-column of $\hat L$ and the first block-row of $\hat K$ are zero. The required triangular structure is therefore satisfied because $\hat z_+$ only depends on~$y^1$ in~\eqref{eq:tpof_opt_1} and~$u^1$ only depends on~$\hat z$ in~\eqref{eq:tpof_opt_2}.

Since Player 2 observes all measurements and control actions, its estimate of the state is the standard Kalman filter. Therefore, $z_t$ evolves according to \eqref{eq:tpof_opt_2} where $L_{0:T-1}$ satisfies~\eqref{eq:sol_Lt}.

The result of Theorem~\ref{thm:second_str} implies that the optimal control vector can be expressed as
$
u \eqt \tilde{u} + \hat K z
$
where $\tilde{u}_t$ is chosen by the coordinator and  $\hat K_t$ is a matrix whose first block-row is zero, so $E_1^\tp \hat K_t = 0$. Our first step will be to fix  $\hat K_t$ for all $t$ and to optimize for the coordinator's strategy. We begin by computing 
$
\hat z_t = \eecs{\mathbf{x}_t}{\hat{\mathbf{i}}_t=\hat i_t}.
$
To this end, we construct an equivalent centralized problem and appeal once again to Lemma~\ref{lem:centralized_explicit}. By Lemma~\ref{lem:smoothing}, we have 
$ \hat z_t = \eecs{\mathbf{z}_t}{\hat{\mathbf{i}}_t=\hat i_t}$
so we may estimate $\mathbf{x}_t$ by estimating $\mathbf{z}_t$ instead. Substituting the definitions for $\mathbf{u}_t$ and $\mathbf{y}_t$ into~\eqref{eq:state_eqns} and \eqref{eq:tpof_opt_2}, we obtain the state equations
\begin{align*}
\bmat{ \mathbf{z}_+ \\ \mathbf{e}_+} &\eqt
	\bmat{A+B\hat K & -LC \\ 0 & A+LC} \bmat{ \mathbf{z} \\ \mathbf{e} }
		+ \bmat{B \\ 0} \tilde{\mathbf{u}} + \bmat{-L\mathbf{v} \\ \mathbf{w}+L\mathbf{v}} \\
\mathbf{y}^1 &\eqt \bmat{ E_1^\tp C & E_1^\tp C } \bmat{ \mathbf{z} \\ \mathbf{e} } + \mathbf{v}^1
\end{align*}
where we have defined the error signal $\mathbf{e} :\eqt \mathbf{x} - \mathbf{z}$. Apply Lemma~\ref{lem:centralized_explicit} to compute the $\Sigma$-recursion. A straightforward induction argument shows that the covariance and gain matrices that satisfy~\eqref{eq:sol_Lt} at time $t$ are given by
\[
\bmat{\hat \Sigma_t - \Sigma_t & 0 \\ 0 & \Sigma_t}
\quad\text{and}\quad
\bmat{ \hat L_t E_1 \\ 0 }
\]
where $\hat \Sigma_t$ and $\hat L_t$ satisfy~\eqref{sigma_hat}. 
 Computing the estimation equations~\eqref{eq:centr_opt}, we find that the estimate of ${\mathbf{e}}_t$ is $0$, and
\begin{equation}\label{eq:zhat_est}
\hat{\mathbf{z}}_+ \eqt \hat A \hat{\mathbf{z}} + B\tilde{\mathbf{u}} -\hat L \mathbf{y}
\end{equation}
where $\hat A$ is defined in the theorem statement.
State and input split into conditionally independent parts
\[
\bmat{\mathbf{x} \\ \mathbf{u}} \eqt \bmat{\mathbf{z} \\ \tilde{\mathbf{u}} + \hat K\mathbf{z}} + \bmat{\mathbf{e} \\ 0}
\]
so the only relevant part of the cost~\eqref{eq:cost_2p} is
\[ 
\ee\biggl( \sum_t \bmat{\mathbf{z} \\ \tilde{\mathbf{u}} + \hat K\mathbf{z}}^\tp 
\bmat{Q & S \\ S^\tp & R} 
\bmat{\mathbf{z} \\ \tilde{\mathbf{u}} + \hat K\mathbf{z}}
+ \mathbf{z}_T^\tp P_\textup{final} \mathbf{z}_T \biggr)
\]
Applying the $P$-recursion~\eqref{eq:sol_Kt} from Lemma~\ref{lem:centralized_explicit} to solve for $\tilde u$, we find after some algebra that
\begin{equation}\label{utilde}
\tilde u \eqt (K - \hat K)\hat z
\end{equation}
where $K_{0:T-1}$ is the centralized gain given by~\eqref{eq:sol_Kt}. Substituting~\eqref{utilde} into~\eqref{eq:zhat_est}, we recover the desired form for Player~1's estimator~\eqref{eq:tpof_opt_1}.

We have shown thus far that for any fixed $\hat K_t$, the optimal Player 1 estimator has the form specified in \eqref{eq:tpof_opt_1}, where~\eqref{sigma_hat}, and~\eqref{eq:belief} hold and the coordinator's action is given by~\eqref{utilde}.
Note that  since the first block-row of $\hat K_t$ is zero, it follows from \eqref{utilde} that  $ u^1 \eqt E_1^\tp K \hat z$ where $K_{0:T-1}$ is the centralized gain. Therefore, the optimal strategies for the two players have the following structure:
\begin{align}\label{eq:pbpo1}
&\begin{aligned}
\hat z_0 &= 0 \\
\hat z_+ &\eqt (A+BK)\hat z  - \hat L( y - C\hat z ) \\
u^1 &\eqt E_1^\tp K \hat z
\end{aligned} \\[1mm]
\label{eq:pbpo2}
&\begin{aligned}
z_0 &= 0 \\
z_+ &\eqt Az + Bu - L( y - Cz ) \\
u^2 &\eqt E_2^\tp K \hat z + E_2^\tp\hat K ( z - \hat z )
\end{aligned}
\end{align}
for some $\hat L_{0:T-1}$ and $\hat K_{0:T-1}$.
Because our problem is partially nested, a strategy of the form~\eqref{eq:pbpo1}--\eqref{eq:pbpo2} is globally optimal if and only if it is person by person optimal.

For a given $\hat K_{0:T-1}$, Player 1's strategy of the form \eqref{eq:pbpo1} will coincide with the coordinator's best response to  $\hat K_{0:T-1}$ if $\hat L_{0:T-1}$ satisfies~\eqref{sigma_hat}. Therefore, a strategy of the form \eqref{eq:pbpo1} with $\hat L_{0:T-1}$ satisfying~\eqref{sigma_hat} must be Player~1's best response to $\hat K_{0:T-1}$.
For a given choice of  $\hat L_{0:T-1}$, we now seek the best response of Player 2 of the form in \eqref{eq:pbpo2}. The combined control vector of the two players can be written as
$
u \eqt K \hat z + E_2\bar u,
$
where we allow $\bar u$ to be a function of player 2's entire information.
Gathering the state equations~\eqref{eq:state_eqns} and the estimator equations~\eqref{eq:tpof_opt_1}--\eqref{eq:tpof_opt_2}, we obtain
\begin{align*}
\bmat{ {\mathbf{x}}_+ \\ \hat{\mathbf{e}}_+ } &\eqt
\advance\arraycolsep-2pt
\bmat{A\!+\!BK & -BK \\ 0 & A\!+\!\hat LC}
\! \bmat{ {\mathbf{x}} \\ \hat{\mathbf{e}} }
+ \bmat{BE_2 \\ BE_2}\!\mathbf{\bar u} +
\bmat{ \mathbf{w} \\ \mathbf{w}+\hat L\mathbf{v} } \\
\mathbf{y} &\eqt \bmat{C & 0} \bmat{\mathbf{x} \\ \hat{\mathbf{e}}} + \mathbf{v}
\end{align*}
where we have defined the error signal~$\hat{\mathbf{e}} :\eqt \mathbf{x}-\hat{\mathbf{z}}$. The cost~\eqref{eq:cost_2p} is given by
\[
\ee\biggl( \sum_t \bmat{\mathbf{x} \\ K\hat{\mathbf{z}} + E_2\bar{\mathbf{u}}}^\tp \!
\advance\arraycolsep-2pt
\bmat{Q & S \\ S^\tp & R}\! 
\bmat{\mathbf{x} \\ K\hat{\mathbf{z}} + E_2\bar{\mathbf{u}}}
+ \mathbf{x}_T^\tp P_\textup{final} \mathbf{x}_T \!\biggr)
\]
where the correct coordinates can be obtained by by substituting $\hat{\mathbf{z}}\eqt \mathbf{x}-\hat{\mathbf{e}}$.
Now apply Lemma~\ref{lem:centralized_explicit} to compute the $P$-recursion. A straightforward induction argument shows that the cost-to-go and gain matrices that satisfy~\eqref{eq:sol_Kt} at time~$t$ are given by
\[
\bmat{ P_t & 0 \\ 0 & \hat P_t - P_t}
\quad\text{and}\quad
\bmat{ 0 & E_2^\tp \hat K_t}
\]
where $\hat P_t$ and $\hat K_t$ satisfy~\eqref{P_hat}. It follows from Lemma~\ref{lem:centralized_explicit} that the optimal input is
\begin{align*}
\bar u_t &= \bmat{ 0 & E_2^\tp \hat K_t }
\bmat{ \eecs{\mathbf{x}_t}{\mathbf{i}_t = i_t} \\[1mm]
		\eecs{\hat{\mathbf{e}}_t}{\mathbf{i}_t = i_t} } 
= E_2^\tp \hat K_t(z_t - \hat z_t)
\end{align*}
Despite allowing $\bar u_t$ to depend on the full measurement history $i_t$, we find that it only depends on $(z_t-\hat z_t)$, so that Player 2's best response is of the form \eqref{eq:pbpo2}.

Thus, $\hat K_{0:T-1}, \hat L_{0:T-1}$ satisfying \eqref{sigma_hat} and \eqref{P_hat} constitute a person-by-person optimal, and consequently, a globally optimal solution of our problem.
\end{proof}

\begin{cor}[Nonzero initial state]
If the initial state for Problems~\ref{prob:TPLQG} and~\ref{prob:CLQG} is changed to
$\mathbf{x}_0 = \norml(\mu_\textup{init}, \Sigma_\textup{init})$,
then Lemma~\ref{lem:centralized_explicit} and Theorem~\ref{thm:tpof_explicit} change as follows.
\begin{itemize}
 \item[(i)] Estimators initialized at $z_0 = \hat z_0 = \mu_\textup{init}$.
\item[(ii)] Costs $\mathcal{J}_0$ and $\hat{\mathcal{J}}_0$ each increased by $\mu_\textup{init}^\tp P_0 \mu_\textup{init}$.
\end{itemize}
\end{cor}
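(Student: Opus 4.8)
The plan is to reduce the nonzero-mean problem to the already-solved zero-mean problem by subtracting a deterministic nominal trajectory. Let $\bar x_t$ be the noise-free centralized LQR trajectory: $\bar x_0 = \mu_\textup{init}$, $\bar x_+ \eqt (A+BK)\bar x$ with nominal input $\bar u \eqt K\bar x$, where $K$ is the gain from~\eqref{eq:sol_Kt}. Since $\mu_\textup{init}$ and all system matrices are common knowledge, this sequence is deterministic and known to both players. I would introduce the shifted variables $\mathbf{x}^\circ := \mathbf{x} - \bar x$, $\mathbf{u}^\circ := \mathbf{u} - \bar u$ and $\mathbf{y}^\circ := \mathbf{y} - C\bar x$, and verify that they satisfy exactly the dynamics and observation equations~\eqref{eq:state_eqns} of the original problem, but now with the zero-mean initial condition $\mathbf{x}^\circ_0 = \norml(0,\Sigma_\textup{init})$. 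Because $\bar x$, $\bar u$, $C\bar x$ are deterministic, this shift is an information-preserving bijection between policies for the two problems, so it suffices to analyze the shifted system.

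For part~(i), I would check that the estimator recursions respect this decomposition. The covariance recursions~\eqref{eq:sol_Lt} and~\eqref{sigma_hat} and all gains $K,L,\hat K,\hat L$ depend only on the cost and noise covariances, not on the mean, so they are unchanged. Writing $\hat z = \bar x + \hat z^\circ$ and $z = \bar x + z^\circ$ in~\eqref{eq:tpof_opt_1}--\eqref{eq:tpof_opt_2} and using $\bar x_+ = (A+BK)\bar x$, the nominal part cancels and the residuals $\hat z^\circ, z^\circ$ obey precisely the zero-mean estimator equations with zero initial value. Hence the only change is the initialization $z_0 = \hat z_0 = \mu_\textup{init}$, which is exactly the prior mean $\ee\mathbf{x}_0$; the same holds in the centralized case.

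For part~(ii), I would expand the quadratic cost~\eqref{eq:cost_2p} under $\bmat{\mathbf{x}\\\mathbf{u}} = \bmat{\bar x\\\bar u} + \bmat{\mathbf{x}^\circ\\\mathbf{u}^\circ}$ into three groups: a purely deterministic nominal cost, the zero-mean cost (which equals the original optimum~\eqref{eq:cost_twoplayer}, or~\eqref{eq:cost_centralized} in the centralized case), and cross terms linear in $(\mathbf{x}^\circ,\mathbf{u}^\circ)$. The deterministic nominal cost is the value of the LQR problem started at $\mu_\textup{init}$ under the optimal gain $K$, which by the Riccati recursion~\eqref{eq:sol_Kt} equals $\mu_\textup{init}^\tp P_0 \mu_\textup{init}$.

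The main obstacle is showing the cross terms vanish for \emph{every} admissible policy, not merely the optimal one. Taking expectations and using that $\bar x,\bar u$ are deterministic, the cross terms reduce to the first variation of the deterministic LQR objective at $(\bar x,\bar u)$ in the direction of the mean trajectory $(\ee\mathbf{x}^\circ, \ee\mathbf{u}^\circ)$. Since $\ee\mathbf{w} = \ee\mathbf{v} = 0$ and $\ee\mathbf{x}^\circ_0 = 0$, this direction starts at the origin and satisfies the homogeneous dynamics $\ee\mathbf{x}^\circ_+ = A\,\ee\mathbf{x}^\circ + B\,\ee\mathbf{u}^\circ$, so it is tangent to the LQR feasible set. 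As $(\bar x,\bar u)$ minimizes that objective, its first variation along any such direction is zero; this holds irrespective of the information constraints, which only restrict which perturbations are achievable. Therefore the total cost is $\mu_\textup{init}^\tp P_0 \mu_\textup{init}$ plus the shifted zero-mean cost, and minimizing over policies leaves the nominal term untouched and recovers the zero-mean optimum. This simultaneously proves that the stated policy is optimal and that the cost increases by exactly $\mu_\textup{init}^\tp P_0 \mu_\textup{init}$, for both Theorem~\ref{thm:tpof_explicit} and Lemma~\ref{lem:centralized_explicit}.
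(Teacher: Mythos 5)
Your proof is correct, but it takes a genuinely different route from the paper's. The paper disposes of the corollary in one line: it appends a pre-initial timestep with trivial dynamics ($A_{-1}=C_{-1}=I$, $B_{-1}=0$) and invokes the zero-initial-mean results on the augmented system, leaning entirely on the fact that Lemma~\ref{lem:centralized_explicit} and Theorem~\ref{thm:tpof_explicit} were stated in enough time-varying, dimension-varying generality to absorb an extra step. You instead subtract the deterministic nominal LQR trajectory $\bar x_0=\mu_\textup{init}$, $\bar x_+ \eqt (A+BK)\bar x$, reduce to the zero-mean problem by an information-preserving change of variables, and kill the cross terms by observing that $(\bar x,\bar u)$ minimizes the deterministic LQR objective, so its first variation along the feasible direction $(\ee\mathbf{x}^\circ,\ee\mathbf{u}^\circ)$ vanishes. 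Your argument is longer but more self-contained: it makes explicit why the cost offset is exactly $\mu_\textup{init}^\tp P_0\mu_\textup{init}$ for \emph{every} admissible policy (so the minimization decouples), and why the estimator recursions change only in their initialization (the gains depend only on covariances, which are mean-invariant). The paper's augmentation is slicker and computation-free, but it asks the reader to verify that the pre-initial step actually reproduces the prior $\norml(\mu_\textup{init},\Sigma_\textup{init})$ at $t=0$ and to trace how the augmented $P_{-1}$ yields the stated offset; your completion-of-squares route supplies exactly those details. One point worth stating explicitly in your writeup: the nominal input $\bar u \eqt K\bar x$ is deterministic and common knowledge, so adding its two blocks back to $\mathbf{u}^{\circ 1}$ and $\mathbf{u}^{\circ 2}$ respects each player's information constraint, which is what makes the policy correspondence a genuine bijection for the decentralized problem and not just the centralized one.
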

\begin{proof}
Add a pre-initial timestep $\mathbf{x}_{-1} = \norml(0,I)$ with trivial dynamics $A_{-1}=C_{-1}=I$, $B_{-1}=0$ and apply the zero-initial-mean result to the augmented system.
\end{proof}

\section{Efficient computation}\label{sec:efficient}
Theorem~\ref{thm:tpof_explicit} provides a state-space realization for the two-player problem similar to Lemma~\ref{lem:centralized_explicit}, with an important difference. In Lemma~\ref{lem:centralized_explicit}, the recursions~\eqref{eq:sol_Lt} and~\eqref{eq:sol_Kt} can be solved independently by propagating time forward or backward respectively. However, the recursions~\eqref{sigma_hat}--\eqref{P_hat} are coupled in an intricate way. Both~$\hat P$ and~$\hat\Sigma$ recursions contain $\hat A$, which depends on~$\hat K$ and~$\hat L$. Furthermore, the equations for $\hat K$ contains $\hat L$ and vice-versa.

Despite being nonlinear difference equations coupled across all timesteps, the recursions~\eqref{sigma_hat}--\eqref{P_hat} can be solved efficiently. In the following theorem, we show that the equations for $\hat \Sigma, \hat P, \hat L, \hat K$ can be reduced to a \emph{linear} two-point boundary-value problem and thereby solved as efficiently as~\eqref{eq:sol_Lt}--\eqref{eq:sol_Kt}.

\begin{thm}\label{thm:dirty_formulas}
In the solution to Problem~\ref{prob:TPLQG} given by~\eqref{eq:tpof_opt_1}--\eqref{eq:tpof_opt_2}, the gains $\hat L_{0:T-1}$, $\hat K_{0:T-1}$ are of the form
\[
\hat L \eqt \bmat{ M & 0 \\ \hat L^{21} & 0 }
\qquad\text{and}\qquad
\hat K \eqt \bmat{ 0 & 0 \\ \hat K^{21} & J }
\]
where $M_{0:T-1}$ satisfies the forward recursion
\begin{equation}\label{eq:sol_Mt}
\begin{aligned}
\Gamma_0 &= \Sigma_\textup{init}^{11} \\
\Gamma_+ &\eqt A_{11}\Gamma A_{11}^\tp + M(C_{11}\Gamma A_{11}^\tp + U_{11}) + W_{11} \\
M &\eqt -(A_{11}\Gamma C_{11}^\tp + U_{11}^\tp) (C_{11}\Gamma C_{11}^\tp + V_{11})^{-1}
\end{aligned}
\end{equation}
and $J_{0:T-1}$ satisfies the backward recursion
\begin{equation}\label{eq:sol_Jt}
\begin{aligned}
F_T &= P_\textup{final}^{22} \\
F &\eqt A_{22}^\tp F_+ A_{22} + (A_{22}^\tp F_+ B_{22} + S_{22}) J + Q_{22} \\
J &\eqt -(B_{22}^\tp F_+ B_{22} + R_{22})^{-1} (B_{22}^\tp F_+ A_{22} + S_{22}^\tp)
\end{aligned}
\end{equation}
Finally, $\hat\Sigma^{21}_{0:T}$, $\hat L^{21}_{0:T}$, $\hat P_{0:T-1}^{21}$, $\hat K_{0:T-1}^{21}$ satisfy the coupled forward and backward recursions
\if\MODE3
\begin{align}\label{eq:sol_coupled_gains}
&\begin{aligned}
\hat\Sigma^{21}_0 &= \Sigma_\textup{init}^{21} \\
\hat\Sigma^{21}_+ &\eqt A_J \hat\Sigma^{21} A_M^\tp 
	+ B_{22}\hat K^{21}(\Gamma - \Sigma^{11})A_M^\tp
	 + \bigl(A_{21}\Gamma - B_{22}J\Sigma^{21}\bigr)A_M^\tp
	+ U_{12}^\tp M^\tp + W_{21}\\
\hat L^{21} &\eqt -\Bigl[ A_J\hat\Sigma^{21} C_{11}^\tp 
	+ B_{22}\hat K^{21}(\Gamma - \Sigma^{11})C_{11}^\tp
	+ (A_{21}\Gamma - B_{22}J\Sigma^{21} )C_{11}^\tp + U_{12}^\tp \Bigr] \\
	&\hspace{9cm} \times (C_{11}\Gamma C_{11}^\tp + V_{11})^{-1}
\end{aligned}\\
\label{eq:sol_coupled_gains2}
&\begin{aligned}
\hat P^{21}_T &= P_\textup{final}^{21} \\
\hat P^{21} &\eqt  A_J^\tp \hat P^{21}_+ A_M
	+ A_J^\tp (F_+ - P_+^{22})\hat L^{21} C_{11}
	+ A_J^\tp\bigl(F_+ A_{21} - P_+^{21} MC_{11}\bigr)
	+ J^\tp S_{12}^\tp + Q_{21} \\
\hat K^{21} &\eqt -(B_{22}^\tp F_+ B_{22} + R_{22})^{-1} \\
	&\hspace{1cm}\times \Bigl[ B_{22}^\tp \hat P^{21}_+ A_M 
	+ B_{22}^\tp (F_+ - P_+^{22})\hat L^{21}C_{11} + S_{12}^\tp
	+ B_{22}^\tp ( F_+ A_{21} - P_+^{21}MC_{11}) \Bigr]
\end{aligned}
\end{align}
where $A_M :\eqt A_{11} + M C_{11}$ and $A_J :\eqt A_{22}+B_{22}J$.
\else
\eqref{eq:sol_coupled_gains}--\eqref{eq:sol_coupled_gains2}
together with the definitions $A_M :\eqt A_{11} + M C_{11}$ and $A_J :\eqt A_{22}+B_{22}J$.
\begin{figure*}
\setcounter{equation}{33}
\begin{align}\label{eq:sol_coupled_gains}
&\begin{aligned}
\hat\Sigma^{21}_0 &= \Sigma_\textup{init}^{21} \\
\hat\Sigma^{21}_+ &\eqt A_J \hat\Sigma^{21} A_M^\tp 
	+ B_{22}\hat K^{21}(\Gamma - \Sigma^{11})A_M^\tp
	 + \bigl(A_{21}\Gamma - B_{22}J\Sigma^{21}\bigr)A_M^\tp
	+ U_{12}^\tp M^\tp + W_{21}\\
\hat L^{21} &\eqt -\bigl( A_J\hat\Sigma^{21} C_{11}^\tp 
	+ B_{22}\hat K^{21}(\Gamma - \Sigma^{11})C_{11}^\tp
	+ (A_{21}\Gamma - B_{22}J\Sigma^{21} )C_{11}^\tp + U_{12}^\tp \bigr) (C_{11}\Gamma C_{11}^\tp + V_{11})^{-1}
\end{aligned}\\
\label{eq:sol_coupled_gains2}
&\begin{aligned}
\hat P^{21}_T &= P_\textup{final}^{21} \\
\hat P^{21} &\eqt  A_J^\tp \hat P^{21}_+ A_M
	+ A_J^\tp (F_+ - P_+^{22})\hat L^{21} C_{11}
	+ A_J^\tp\bigl(F_+ A_{21} - P_+^{21} MC_{11}\bigr)
	+ J^\tp S_{12}^\tp + Q_{21} \\
\hat K^{21} &\eqt -(B_{22}^\tp F_+ B_{22} + R_{22})^{-1} \bigl( B_{22}^\tp \hat P^{21}_+ A_M 
	+ B_{22}^\tp (F_+ - P_+^{22})\hat L^{21}C_{11}
	+ B_{22}^\tp ( F_+ A_{21} - P_+^{21}MC_{11}) + S_{12}^\tp \bigr)
\end{aligned}
\end{align}
\hrulefill
\vspace*{-5pt}
\end{figure*}
\fi
\end{thm}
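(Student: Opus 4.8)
The plan is to substitute the claimed block structure of $\hat L$ and $\hat K$ into the coupled recursions~\eqref{sigma_hat}--\eqref{P_hat} and verify, by induction on~$t$, that the individual blocks satisfy the asserted decoupled and coupled recursions. I would first record the structural facts already in hand: $\hat L E_2 = 0$ and $E_1^\tp \hat K = 0$ force $\hat L = \bmat{\hat L^{11} & 0 \\ \hat L^{21} & 0}$ and $\hat K = \bmat{0 & 0 \\ \hat K^{21} & \hat K^{22}}$, so that $\hat A = A + B\hat K + \hat L C$ is block-lower-triangular with diagonal blocks $A_{11}+\hat L^{11}C_{11}$ and $A_{22}+B_{22}\hat K^{22}$. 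Reading off the $(1,1)$ block of the gain formula in~\eqref{sigma_hat} shows that $\hat L^{11}$ depends only on $\hat\Sigma^{11}$ (the $B\hat K$ term drops since $B\hat K$ has zero first block-row). Dually, the $(2,2)$ block of the gain formula in~\eqref{P_hat} shows $\hat K^{22}$ depends only on $\hat P^{22}$; here the term $B^\tp(\hat P_+-P_+)\hat L C$ contributes nothing to the $(2,2)$ block because the second block-column of $\hat L$ is zero, so that $\hat L C E_2 = 0$.

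Next I would establish the two decoupled diagonal recursions. For the forward direction I claim $\hat\Sigma^{11}_t = \Gamma_t$, whence $\hat L^{11}_t = M_t$. The cleanest justification uses the estimation interpretation of $\hat\Sigma$ from~\eqref{eq:belief}: since subsystem~1 is autonomous, its measurement $\mathbf{y}^1 = C_{11}\mathbf{x}^1 + \mathbf{v}^1$ depends only on $\mathbf{x}^1$, and $\mathbf{u}^1$ is known to Player~1, so $\hat\Sigma^{11}_t = \covc{\mathbf{x}^1_t}{\hat{\mathbf{i}}_t}$ is exactly the error covariance of the self-contained Kalman filter for subsystem~1, namely $\Gamma_t$ from~\eqref{eq:sol_Mt}. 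Dually, $\hat P^{22}_t = F_t$ and $\hat K^{22}_t = J_t$: this is the cost-to-go of the subsystem-2 regulator, reflecting that Player~2's private correction acts only through $B_{22}$ on subsystem~2, whose downstream dynamics are $A_{22}$. I would obtain this either by the primal--dual symmetry that exchanges estimation with control and the two players, or by directly verifying that the $(2,2)$ block of~\eqref{P_hat} collapses to~\eqref{eq:sol_Jt}. With these established, $\hat A$ has diagonal blocks $A_M := A_{11}+MC_{11}$ and $A_J := A_{22}+B_{22}J$.

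With the diagonal blocks pinned down, I would extract the $(2,1)$ block of each recursion. Using $\hat A^{12}=0$, $\hat A^{11}=A_M$, $\hat A^{22}=A_J$, the $(2,1)$ block of the $\hat\Sigma$-recursion together with the $(2,1)$-gain formula reduces, after substituting $\hat\Sigma^{11}=\Gamma$ and $\hat K^{22}=J$, to exactly~\eqref{eq:sol_coupled_gains}; symmetrically the $(2,1)$ block of the $\hat P$-recursion yields~\eqref{eq:sol_coupled_gains2}. The essential observation is that in these off-diagonal equations the unknowns $\hat\Sigma^{21}, \hat L^{21}, \hat P^{21}, \hat K^{21}$ appear only \emph{linearly}, all nonlinearity having been absorbed into the already-computed $\Gamma, M, \Sigma, F, J, P$. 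Thus~\eqref{eq:sol_coupled_gains} propagates forward from $\hat\Sigma^{21}_0=\Sigma_\textup{init}^{21}$ and~\eqref{eq:sol_coupled_gains2} backward from $\hat P^{21}_T=P_\textup{final}^{21}$, coupled only through the cross-appearances of $\hat K^{21}$ and $\hat L^{21}$, which is precisely a linear two-point boundary-value problem.

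I expect the main obstacle to be the heavy block-matrix bookkeeping required to show the diagonal blocks genuinely decouple to the pure-subsystem recursions. The difficulty is that the quadratic cross terms in~\eqref{sigma_hat} and~\eqref{P_hat}---namely $(\hat L - L)(C\Sigma C^\tp + V)(\hat L - L)^\tp$ and $(\hat K - K)^\tp(B^\tp P_+ B + R)(\hat K - K)$---involve the full centralized gains $L$ and $K$, which are \emph{not} block-triangular, so their off-diagonal blocks $L^{12}, L^{22}, K^{12}, K^{22}$ intrude into the $(1,1)$ and $(2,2)$ computations. One must show these contributions combine with the $\Sigma_+$ and $P$ terms so that the diagonal blocks reduce to~\eqref{eq:sol_Mt} and~\eqref{eq:sol_Jt}. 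Invoking the estimation interpretation for $\hat\Sigma^{11}=\Gamma$ sidesteps the worst of this algebra on the forward side, and the primal--dual symmetry transports the argument to the backward side, leaving only the routine (if lengthy) verification that the off-diagonal equations take exactly the stated coupled form.
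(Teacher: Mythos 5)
Your plan is correct and follows essentially the same route as the paper, whose own proof is only a one-sentence sketch stating that \eqref{eq:sol_Mt}, \eqref{eq:sol_Jt}, \eqref{eq:sol_coupled_gains}, and \eqref{eq:sol_coupled_gains2} are obtained by simplifying the $11$, $22$, and $21$ blocks of \eqref{sigma_hat}--\eqref{P_hat} after substituting the zero patterns forced by $\hat L E_2 = 0$ and $E_1^\tp \hat K = 0$. You correctly identify the one nontrivial point the paper glosses over---that the non-triangular centralized gains $K$ and $L$ enter the diagonal blocks through the quadratic terms and must cancel against the $P$ and $\Sigma_+$ contributions---and your probabilistic shortcut $\hat\Sigma^{11}_t = \covcs{\mathbf{x}^1_t}{\hat{\mathbf{i}}_t} = \Gamma_t$ (using \eqref{eq:belief} and the autonomy of subsystem~1) is a legitimate way to bypass that algebra on the forward side, with the direct $22$-block verification handling the backward side.
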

\begin{proof}
This result follows from Theorem~\ref{thm:tpof_explicit} and some straightforward algebra, so we omit the details.
The recursions~\eqref{eq:sol_Mt} and \eqref{eq:sol_Jt} are obtained by simplifying the~11 block of~\eqref{sigma_hat} and the~22 block of~\eqref{P_hat}, respectively. Finally, the recursions~\eqref{eq:sol_coupled_gains} and~\eqref{eq:sol_coupled_gains2} are obtained by simplifying the~21 blocks of~\eqref{sigma_hat} and~\eqref{P_hat} respectively.
\end{proof}

Theorem~\ref{thm:dirty_formulas} reduces the coupled recursions found in Theorem~\ref{thm:tpof_explicit} to a two-point linear boundary value problem. From a computational standpoint, computing the matrices $L_t$, $M_t$, $K_t$, $J_t$ using~\eqref{eq:sol_Lt}--\eqref{eq:sol_Kt} and~\eqref{eq:sol_Mt}--\eqref{eq:sol_Jt} requires recursing through the entire time horizon. This requires $\ord(T)$ operations.

It turns out that $\hat \Sigma^{21}_t$, $\hat P^{21}_t$, $\hat L^{21}_t$, $\hat K^{21}_t$ (and consequently $\hat L_t$ and $\hat K_t$) can also be computed in $\ord(T)$. To see why, note that~\eqref{eq:sol_coupled_gains}--\eqref{eq:sol_coupled_gains2} are of the form
\if\MODE3\else\setcounter{equation}{31}\fi
\begin{equation}\label{eq:phipsi_simpler}
\begin{aligned}
\hat\Sigma^{21}_0 &= \Sigma_\textup{init}^{21} &  \hat P^{21}_T &= P_\textup{final}^{21} \\
\hat\Sigma^{21}_+ &\eqt g_1( \hat\Sigma^{21}, \hat K^{21} ) & \hat P^{21} &\eqt g_2( \hat P^{21}_+, \hat L^{21} ) \\
\hat L^{21} &\eqt g_3( \hat\Sigma^{21}, \hat K^{21} ) & \hat K^{21} &\eqt g_4( \hat P^{21}_+, \hat L^{21} )
\end{aligned}
\end{equation}
where $g_1,\dots,g_4$ are affine functions. Eliminating $\hat L^{21}_t$ and $\hat K^{21}_t$ from~\eqref{eq:phipsi_simpler} using the last row of equations,
\begin{equation}\label{eq:phipsi_simpler2}
\begin{aligned}
\hat\Sigma^{21}_0 &= \Sigma_\textup{init}^{21} &  \hat P^{21}_T &= P_\textup{final}^{21} \\
\hat\Sigma^{21}_+ &\eqt h_1( \hat\Sigma^{21}, \hat P^{21}_+ ) & \hat P^{21} &\eqt h_2( \hat\Sigma^{21}, \hat P^{21}_+ )
\end{aligned}
\end{equation}
where $h_1$ and $h_2$ are affine functions. Now let
\[
\eta :\eqt \bmat{\vecc{\hat P^{21}} \\ \vecc{\hat\Sigma^{21}_+}}
\]
where $\vecc{X}$ is the vector obtained by stacking the columns of $X$.
Then~\eqref{eq:phipsi_simpler2} is a block-tridiagonal system of the form
\[
\bmat{ I & H_1 & & & \\[-1.5mm]
       G_1 & I & \ddots & & \\[-1.5mm]
       & \ddots & \ddots & H_{T-1}\hspace{-8pt}& \\[0.5mm]
       & & G_{T-1}\hspace{-12pt} & I\hspace{-8pt} }
\bmat{ \eta_0 \\ \eta_1 \\ \vdots \\ \eta_{T-1} }
= \bmat{ c_0 \\ c_1 \\ \vdots \\ c_{T-1} }
\]
for some constant matrices $G_{1:T-1}$ and $H_{1:T-1}$ and a constant vector $c_{0:T-1}$. Equations of this form can be solved in $\ord(T)$
using for example block tridiagonal LU factorization~\cite[\S~4.5.1]{gvl}.

Therefore, the optimal controller for the two-player problem presented in Theorem~\ref{thm:tpof_explicit} can be computed with comparable effort to its centralized counterpart in Lemma~\ref{lem:centralized_explicit}.

Note that the infinite-horizon two-player problem can be solved by making suitable assumptions on the system parameters and taking limits in~Theorem~\ref{thm:dirty_formulas}. The recursions~\eqref{eq:sol_Lt}--\eqref{eq:sol_Kt} and~\eqref{eq:sol_Mt}--\eqref{eq:sol_Jt} become algebraic Riccati equations, and the coupled recursions~\eqref{eq:phipsi_simpler} become a small set of linear equations.

\section{Concluding remarks}\label{sec:conclusion}

In this paper, we used a coordinator-based approach to derive a new structural result for a two-player partially nested LQG problem. Our results generalize those from classical LQG theory in a very intuitive way. Rather than maintaining a single estimate of the state, two different estimates must be maintained, to account for the two different sets of information available. As in the centralized case, finding the optimal two-player controller requires solving forward and backward recursions for estimation and control respectively. The key difference is that the recursions for the two-player case are coupled and must be solved together. We show that these recursions can be solved as efficiently as in the centralized case, with complexity proportional to the length of the time horizon.

An effort was made to express our results in a form that showcases the duality between estimation and control. This duality is apparent in~\eqref{sigma_hat}--\eqref{P_hat},~\eqref{eq:sol_Mt}--\eqref{eq:sol_coupled_gains2}, and in the proof of Theorem~\ref{thm:tpof_explicit}. The extent of the duality observed in the solution is perhaps unexpected. Indeed, one might expect a greater burden on the second player since it receives more measurements and must correct for the estimation errors inevitably made by the first player. However, from a different perspective, one might expect a greater burden on the first player since it has more control authority and must act to influence states of the system that the second player cannot control. The second player's lack of control authority mirrors the first player's lack of estimation ability.

\if\MODE3\newpage\fi

\bibliographystyle{abbrv}
\bibliography{pn}

\end{document}